\documentclass[11pt,english]{article}

\usepackage[T1]{fontenc}
\usepackage[latin9]{inputenc}
\usepackage{geometry}               % adjust the margins of pages

\usepackage{graphicx}               % insert graphic files
\usepackage{float}                  % Improves the interface for defining floating objects such as figures and tables.
\usepackage{subfigure}              % Provides support for the manipulation and reference of small or ‘sub’ figures and tables within a single figure
\usepackage{algorithm}
\usepackage[noend]{algorithmic}     % with noend option the end statements are omitted in the output.

\usepackage{amsthm,amsmath,amssymb} % math stuff
\usepackage{color}                  % colored text
\usepackage{textcomp}               % supports the Text Companion fonts, which provide many text symbols
\usepackage{sectsty}                % to help change the style of any or all of LaTeX's sectional headers in the article, book, or report classes.
\usepackage[normalem]{ulem}         % underline text
\usepackage{soul}                   % Provides hyphenateable spacing out (letterspacing), underlining, striking out, etc.
\usepackage{enumerate}              % adds an optional argument to the enumerate environment which determines the style
\usepackage{microtype}              % improves the spacing between words and letters

\usepackage[bookmarksnumbered=true]{hyperref}   % manage links

\graphicspath{{./figures/}}

%%%%%%%%%%%%%%%%%%%%%%%%%%%%%% Textclass specific LaTeX commands.
\theoremstyle{plain}
\newtheorem{theorem}{Theorem}

\theoremstyle{definition}
\newtheorem{definition}{Definition}
\newtheorem{problem}{Problem}

\theoremstyle{remark}
\newtheorem{remark}{Remark}

%%%%%%%%%%%%%%%%%%%%%%%%%%%%%% User specified LaTeX commands.

\def\reals{{\mathbb R}}

\newcommand{\frechet}{Fr\'echet}
\newcommand{\df}{d_F}
\newcommand{\dfd}{d_{dF}}
\newcommand{\npc}{\textbf{NP}-complete}

\date{}

% comments
\newcommand{\old}[1]{{{}}}

%~~~~~~~~~~~~~~~~~~~~~~~~~~~~~~~~~~~~~~~~~~~~~
% Document
%~~~~~~~~~~~~~~~~~~~~~~~~~~~~~~~~~~~~~~~~~~~~~

\begin{document}

%~~~~~~~~~~~~~~~~~~~~~~~~~~~~~~~~~~~~~~~~~~~~~~~~~~~~~~~~~~~~~~~~~~~~~~~~~~~~~~~~~~~~~~~~~~
\title{On the Chain Pair Simplification Problem}

\author{
Chenglin Fan
\thanks{Montana State University, Bozeman, MT, 59717-3880 USA; \texttt{chenglin.fan@msu.montana.edu}}
\and
Omrit Filtser
\thanks{Ben-Gurion University of the Negev, Beer-Sheva 84105, Israel; \texttt{omritna@post.bgu.ac.il}}
\and
Matthew J. Katz
\thanks{Ben-Gurion University of the Negev, Beer-Sheva 84105, Israel; \texttt{matya@cs.bgu.ac.il}}
\and
Tim Wylie
\thanks{The University of Texas-Pan American, Edinburg, TX, 78539 USA; \texttt{wylietr@utpa.edu }}
\and
Binhai Zhu
\thanks{Montana State University, Bozeman, MT, 59717-3880 USA; \texttt{bhz@cs.montana.edu}}
}

\maketitle

\begin{abstract}

The problem of efficiently computing and visualizing the structural resemblance between a pair of
protein backbones in 3D has led Bereg et al.~\cite{BeregJWYZ08} to pose the Chain Pair Simplification problem (CPS).
In this problem, given two polygonal chains $A$ and $B$ of lengths $m$ and $n$, respectively, one needs to
simplify them simultaneously, such that each of the resulting simplified chains, $A'$ and $B'$, is of length at most $k$ and
the discrete \frechet\ distance between $A'$ and $B'$ is at most $\delta$, where $k$ and $\delta$ are given parameters.

In this paper we study the complexity of CPS under the discrete \frechet\ distance (CPS-3F), i.e.,
where the quality of the simplifications is also measured by the discrete \frechet\ distance.
Since CPS-3F was posed in 2008, its complexity has remained open. However, it was believed
to be \npc, since CPS under the Hausdorff distance (CPS-2H) was shown to be \npc.
We first prove that the weighted version of CPS-3F is indeed weakly \npc\, even on the line, based on a reduction from the set partition problem.
Then, we prove that CPS-3F is actually polynomially solvable, by presenting an $O(m^2n^2\min\{m,n\})$ time algorithm
for the corresponding minimization problem.
In fact, we prove a stronger statement, implying, for example, that if weights are assigned to the vertices of only one of the chains, then the problem remains polynomially solvable.
We also study a few less rigid variants of CPS and present efficient solutions for them.

Finally, we present some experimental results that suggest that (the minimization version of) CPS-3F is significantly better than previous algorithms
for the motivating biological application.

\end{abstract}

\section{Introduction}
Polygonal curves play an important role in many applied areas, such as 3D modeling in computer vision,
map matching in GIS, and protein backbone structural alignment and comparison in computational biology.
Many different methods exist to compare curves in these (and in many other) applications, where one of the more prevalent methods
is the \frechet\ distance~\cite{Frechet1906}.

\old{
The study of polygonal curves impacts many applied areas such as 3D modeling in computer vision,
GIS applications with map matching and routing, wireless networking coverage, and computational biology with protein backbone structure alignment and comparison.
There are many methods used to compare curves for these applications, and one of the most prevalent
is the \frechet\ distance \cite{Frechet1906}.
}

The \emph{\frechet\ distance} is often described by an analogy of a man and a dog connected by a leash, each walking along a curve
from its starting point to its end point. Both the man and the dog can control their speed but they are not allowed to backtrack.
The \frechet\ distance between the two curves is the minimum length of a leash that is sufficient for traversing both curves in this manner.

The \emph{discrete \frechet\ distance} is a simpler version, where, instead of continuous curves, we are given finite sequences of points,
obtained, e.g., by sampling the continuous curves, or corresponding to the vertices of polygonal chains.
Now, the man and the dog only hop monotonically along the sequences of points.
The discrete \frechet\ distance is considered a good approximation of the continuous distance.

One promising application of the discrete \frechet\ distance has been protein backbone comparison.
Within structural biology, polygonal curve alignment and comparison is a central problem in
relation to proteins. Proteins are usually studied using RMSD (Root Mean Square Deviation),
but recently the discrete \frechet\ distance was used to align and compare protein backbones,
which yielded favorable results in many instances~\cite{JiangXZ08,WylieLZ11}.
In this application, the discrete version of the \frechet\ distance makes more sense, because by using it the
alignment is done with respect to the vertices of the chains, which represent $\alpha$-carbon atoms.
Applying the continuous \frechet\ distance will result in mapping of arbitrary points,
which is not meaningful biologically.

There may be as many as 500$\sim$600 $\alpha$-carbon atoms along a protein backbone,
which are the nodes (i.e., points) of our chain. This makes efficient computation essential, and is
one of the reasons for considering simplification.
In general, given a chain $A$ of $n$ vertices, a simplification of $A$ is a chain $A'$ such that $A'$ is ``close'' to $A$ and the number of vertices in $A'$ is significantly less than $n$.
The problem of simplifying a 3D polygonal chains under the discrete \frechet\ distance was first addressed by Bereg et al.~\cite{BeregJWYZ08}.

\begin{figure*}[htp]
  \begin{center}
    \subfigure[Simplifying the chains independently does not necessarily preserve the resemblance between them.] {\includegraphics[scale=0.40]{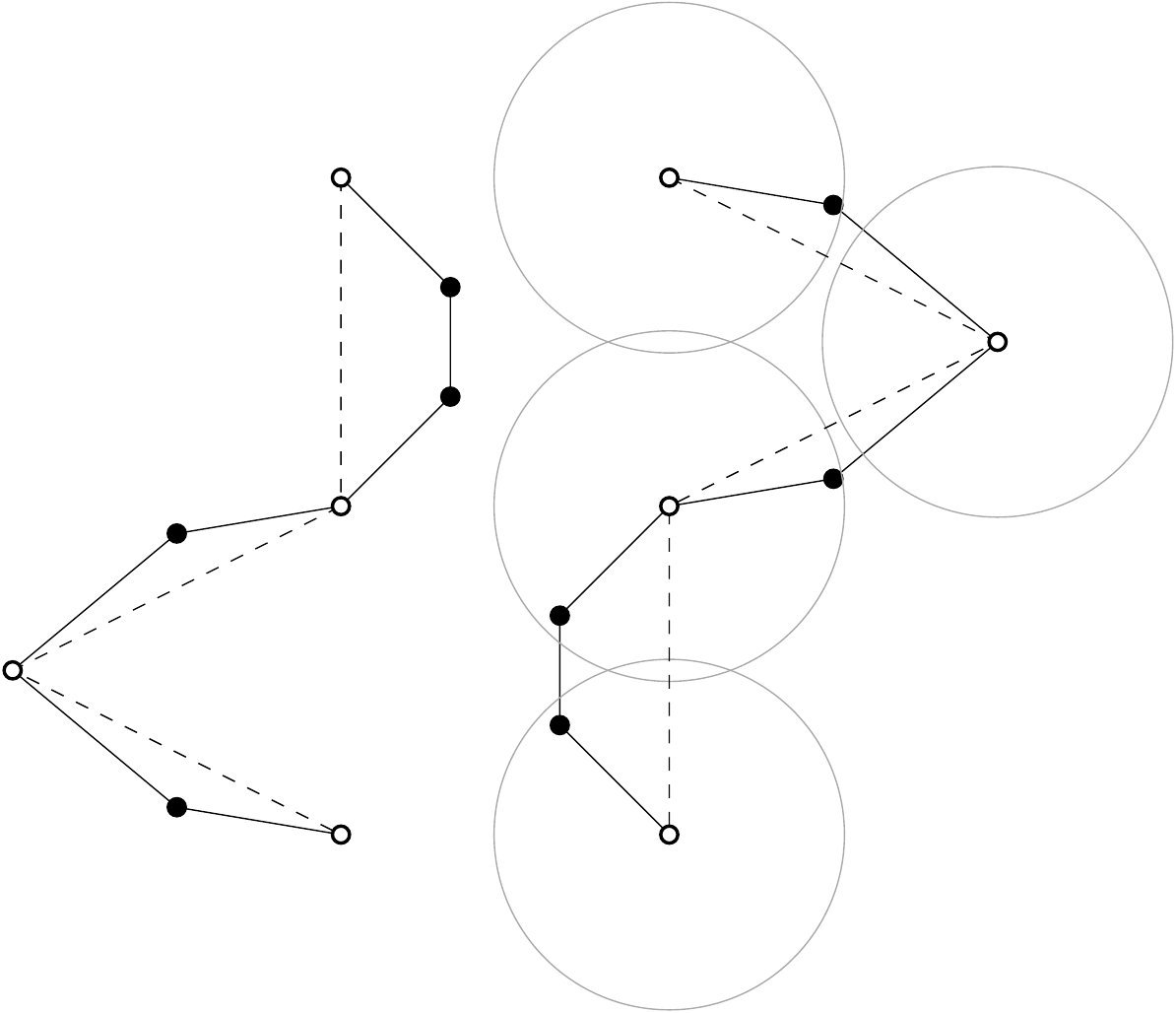}}  \hspace{0.5cm}
    \subfigure[A simplification of both chains that preserves the resemblance between them.] {\includegraphics[scale=0.40]{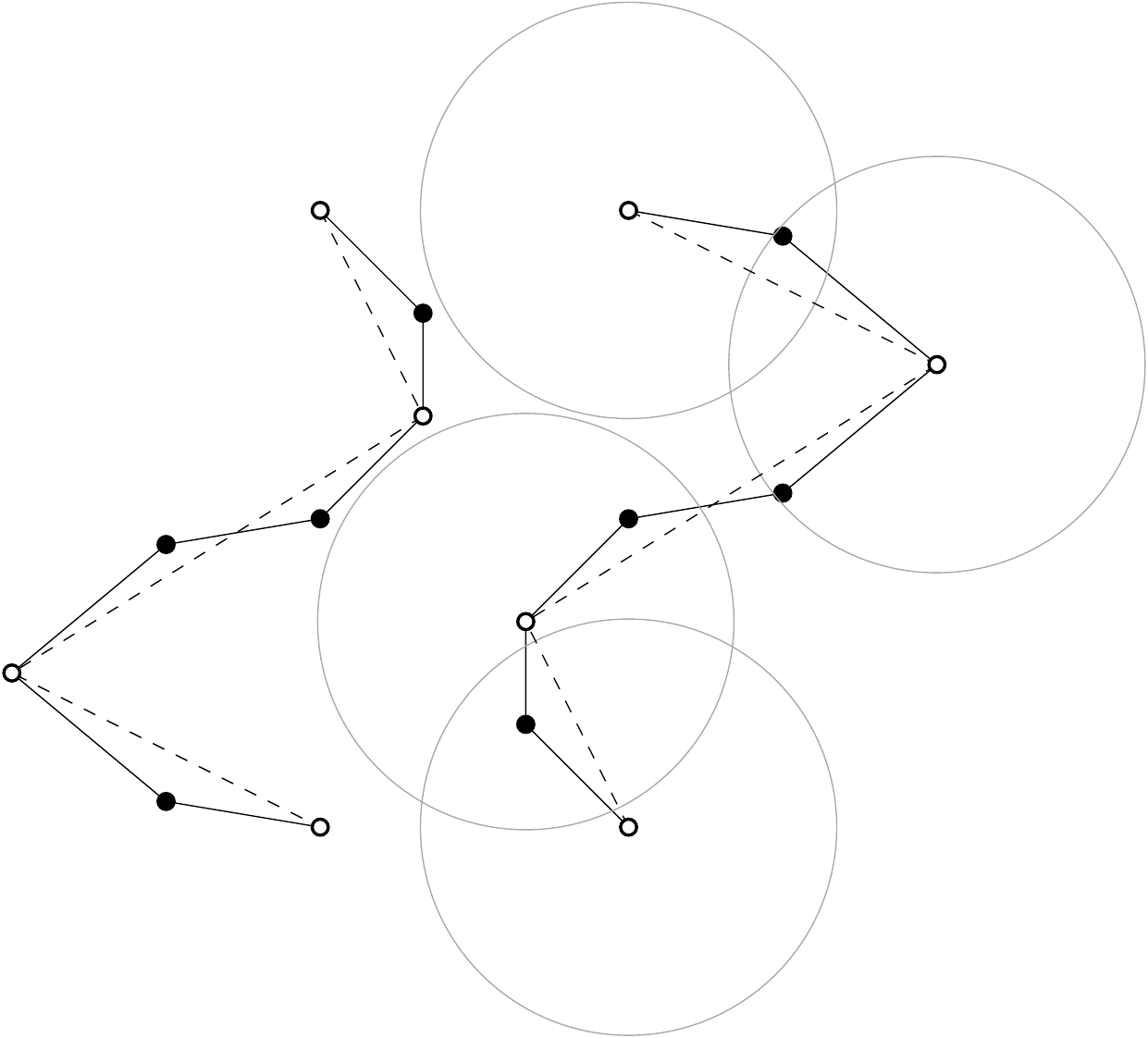}}
  \end{center}
  \caption{Independent simplification vs. simultaneous simplification. Each chain simplification consists of 4 vertices (marked by empty circles) chosen from the corresponding chain. The unit disks illustrate the \frechet\ distance between the right chain in each of the figures and its corresponding simplification; their radius in (b) is larger.}
  \label{fig:separate}
\end{figure*}

Simplifying two aligned chains independently does not necessarily preserve the resemblance between the chains; see Figure \ref{fig:separate}.
Thus, the following question arises: Is it possible to simplify both chains in a way that will retain the resemblance between them?
This question has led Bereg et al.~\cite{BeregJWYZ08} to pose the Chain Pair Simplification problem (CPS).
In this problem, the goal is to simplify both chains simultaneously, so that the discrete \frechet\ distance between the resulting simplifications is bounded. More precisely,
given two chains $A$ and $B$ of lengths $m$ and $n$, respectively, an integer $k$ and three real numbers $\delta_1$,$\delta_2$,$\delta_3$, one needs to find two chains $A'$,$B'$ with vertices from $A$,$B$, respectively, each of length at most $k$, such that $d_1(A,A')\le\delta_1$, $d_2(B,B')\le\delta_2$, $\dfd(A',B')\le\delta_3$ ($d_1$ and $d_2$ can be any similarity measures and $\dfd$ is the discrete \frechet\ distance).
When the chains are simplified using the Hausdorff distance, i.e., $d_1,d_2$ is the Hausdorff distance (CPS-2H), the problem becomes  \npc~\cite{BeregJWYZ08}.
However, the complexity of the version in which $d_1,d_2$ is the discrete \frechet\ distance (CPS-3F) has been open since 2008.

\paragraph*{Related work.}

The \frechet\ distance and its variants have been studied extensively in the past two decades.
Alt and Godau~\cite{AltG95} gave an $O(mn\log mn)$-time algorithm for computing
the \frechet\ distance between two polygonal curves of lengths $m$ and $n$.
This result in the plane was recently improved by Buchin et al~\cite{BuchinBMM14}.
The discrete \frechet\ distance was originally defined by Eiter and Mannila~\cite{EiterM94}, who also
presented an $O(mn)$-time algorithm for computing it. A slightly sub-quadratic algorithm
was given recently by Agarwal et al.~\cite{AgarwalAKS14}.

As mentioned earlier, Bereg et al.~\cite{BeregJWYZ08} were the first to study simplification problems under the discrete \frechet\ distance.
They considered two such problems. In the first, the goal is to minimize the number of vertices
in the simplification, given a bound on the distance between the original chain and its simplification, and, in the second problem,
the goal is to minimize this distance, given a bound $k$ on the number of vertices in the simplification.
They presented an $O(n^2)$-time algorithm for the former problem and an $O(n^3)$-time algorithm for the latter problem, both using dynamic programming,
for the case where the vertices of the simplification are from the original chain.
(For the arbitrary vertices case, they solve the problems in
$O(n\log n)$ time and in $O(kn\log n\log(n/k))$ time, respectively.)
Driemel and Har-Peled~\cite{DriemelH13} showed how to preprocess a polygonal
curve in near-linear time and space, such that, given an integer $k > 0$,
one can compute a simplification in $O(k)$ time which has $2k-1$ vertices of the
original curve and is optimal up to a constant factor (w.r.t. the continuous \frechet\ distance),
compared to any curve consisting of $k$ arbitrary vertices.

For the chain pair simplification problem (CPS), Bereg et al.~\cite{BeregJWYZ08} proved that CPS-2H is \npc, and
conjectured that so is CPS-3F.
Wylie et al.~\cite{WylieLZ11} gave a heuristic algorithm for CPS-3F, using a greedy method with backtracking, and based on the assumption that
the (Euclidean) distance between adjacent $\alpha$-carbon atoms in a protein backbone is almost fixed.
More recently, Wylie and Zhu~\cite{WylieZ13} presented an approximation algorithm with approximation ratio 2 for the optimization version of CPS-3F.
Their algorithm actually solves the optimization version of a related problem called CPS-3$F^+$, it uses dynamic programming and its
running time is between $O(mn)$ and $O(m^2n^2)$ depending on the input simplification parameters.

Some special cases of CPS-3F have recently been studied.
Motivated by the need to reduce sensitivity to outliers when comparing curves, Ben Avraham et al.~\cite{AvrahamFKKS14}
studied the discrete \frechet\ distance with shortcuts problem.
In the one-sided variant, the dog is allowed to jump to any point that comes later in its sequence, rather than just to the
next point. The man has to visit the points in its sequence, one after the other, as in the standard discrete \frechet\ distance problem.
In the two-sided variant, both the man and the dog are allowed to skip points.
Unlike CPS-3F, the difference between an original chain and its simplification (in the two-sided variant) can be big, since the sole goal is to minimize the discrete \frechet\ distance between the two simplified chains.
(For this reason, Ben Avraham et al. do not allow both the man and the dog to move simultaneously, since, otherwise, they would both jump directly to their final points.) Moreover, the length of a simplification is only bounded by the length of the corresponding chain.
Both variants of the shortcuts problem can be solved in subquadratic time.

The one-sided variant of the (continuous) \frechet\ distance with shortcuts problem was studied by Driemel et al.~\cite{DriemelH13},
who considered the problem assuming the curves are \emph{c-packed} and shortcuts start and end at vertices of the noisy curve.
They gave a near-linear time $(3 + \varepsilon)$-approximation algorithm.
Buchin et al.~\cite{BuchinDS14} proved that the more general variant, where shortcuts can be taken at any point along the noisy curve, is \textbf{NP}-hard, and gave an $O(n^3 \log n)$-time 3-approximation algorithm for the corresponding decision problem.
Another approach for handling outliers (that is still somewhat related to our work) was proposed by Buchin et al.~\cite{BuchinBW09}, who studied the partial curve matching problem under the (continuous) \frechet\ distance. That is,
given two curves and a threshold $\delta$, find subcurves of maximum total length that are close to each other w.r.t. $\delta$.

\paragraph*{Our results.}

In Section~\ref{sec:weighted} we introduce the weighted chain pair simplification problem and prove that weighted CPS-3F is weakly \npc.
In Section~\ref{sec:cps3f}, we resolve the question concerning the complexity of CPS-3F by proving that it is polynomially solvable,
contrary to what was believed.
We do this by presenting a polynomial-time algorithm for the corresponding optimization problem.
We actually prove a stronger statement, implying, for example, that if weights are assigned to the vertices of only one of the chains, then the problem remains polynomially solvable.
In Section~\ref{sec:alg2} we devise a sophisticated $O(m^2n^2\min\{m,n\})$-time dynamic programming algorithm for the minimization problem of CPS-3F.
Besides being interesting from a theoretical point of view, only after developing (and implementing) this algorithm, were we able to apply the CPS-3F minimization problem to datasets from the Protein Data Bank (PDB), see below.

In section \ref{sec:simp} we study several less rigid variants of CPS-3F. In particular, we improve the result of Bereg et al.~\cite{BeregJWYZ08} mentioned above on the problem of finding the best simplification of a given length under the discrete \frechet\ distance, by presenting a more general $O(n^2 \log n)$-time algorithm (rather than an $O(n^3)$-time algorithm).

Finally, in Section~\ref{sec:empirical} we present some empirical results comparing (the minimization problems of) CPS-3$F^+$~\cite{WylieZ13} (the best available algorithm prior to this work) and CPS-3F using datasets from the PDB, and showing that with the latter we get much smaller simplifications (obeying the same distance bounds).
%This difference is significant for the purpose of visualizing and analyzing the resemblance between a pair of protein backbones.

\section{Preliminaries} \label{sec:prelims}

Let $A=(a_1\ldots,a_m)$ and $B=(b_1,\ldots,b_n)$ be two sequences of $m$ and $n$ points, respectively, in $\reals^k$.
The discrete \frechet\ distance $\dfd(A,B)$ between $A$ and $B$ is defined as follows.
Fix a distance $\delta > 0$ and consider the Cartesian product $A \times B$ as the vertex set of
a directed graph $G_\delta$ whose edge set is
\begin{align*}
E_\delta = & \big\{ \big((a_i, b_j), (a_{i+1}, b_j)\big) \; | \; d(a_i, b_j), d(a_{i+1}, b_j) \le \delta \big\} \; \cup\\
           & \big\{ \big((a_i, b_j), (a_i, b_{j+1})\big) \; | \; d(a_i, b_j), d(a_i, b_{j+1}) \le \delta \big\} \; \cup\\
           & \big\{ \big((a_i, b_j), (a_{i+1}, b_{j+1})\big) \; | \; d(a_i, b_j), d(a_{i+1}, b_{j+1}) \le \delta \big\}\, .
\end{align*}
Then $\dfd(A,B)$ is the smallest $\delta > 0$ for which $(a_m, b_n)$ is reachable from $(a_1, b_1)$ in $G_\delta$.

\old{

Given two polygonal curves, the \frechet\ distance is usually defined as follows.

\begin{definition}
    The \frechet\ distance, $\df$, between two polygonal curves $f:[0,m] \rightarrow \reals^k$ and $g:[0,n] \rightarrow \reals^k$ is defined as:
    \[
    \df(f,g) = \min_{\substack{
                {\alpha:[0,1]\rightarrow[0,m]}\\
                {\beta:[0,1]\rightarrow[0,n]}}}\
                \max_{t \in [0,1]}\
                \Bigg\{d \Big ( f(\alpha(t)),\, g(\beta(t))\Big)\Bigg \}
    \]
    where $\alpha$ and $\beta$ range over all continuous non-decreasing functions with $\alpha(0)=0$, $\alpha(1)=m$, $\beta(0)=0$, $\beta(1)=n$.
\end{definition}

In the discrete case, we can adapt this definition as follows.
\begin{definition}
    The discrete \frechet\ distance, $\dfd$, between two polygonal curves $f:[0,m] \rightarrow \mathbb{R}^k$ and $g:[0,n] \rightarrow \mathbb{R}^k$ is defined as:
    \[
    \dfd(f,g) = \min_{\substack{
                    {\alpha:[1:m+n]\rightarrow[0:m]}\\
                    {\beta:[1:m+n]\rightarrow[0:n]}}}\
                    \max_{s \in [1:m+n]} \
                    \Bigg \{d \Big ( f(\alpha(s)), \, g(\beta(s)) \Big ) \Bigg \}
    \]
    where $\alpha$ and $\beta$ range over all discrete non-decreasing surjective mappings of the form $\alpha:[1:m+n]\rightarrow[0:m], \beta:[1:m+n]\rightarrow[0:n]$.
\end{definition}

The continuous \frechet\ distance is typically explained as the relationship
between a person and a dog connected by a leash walking along the two curves
and trying to keep the leash as short as possible. However, for the discrete case,
we only consider the nodes of these curves, and thus the man and dog must
``hop'' along the nodes.
}

\vspace{.5cm}
\noindent
The chain pair simplification problem (CPS) is formally defined as follows.
\begin{problem}[Chain Pair Simplification]\hfill \\
    \textbf{Instance:} Given a pair of polygonal chains $A$ and $B$ of lengths $m$ and $n$,
    respectively, an integer $k$, and three real numbers $\delta_1,\delta_2,\delta_3>0$.\\
    \textbf{Problem:} Does there exist a pair of chains $A'$,$B'$ each of at most $k$ vertices,
    such that the vertices of $A'$,$B'$ are from $A$,$B$, respectively, and
    $d_1(A,A') \le \delta_1$, $d_2(B,B') \le \delta_2$, and $\dfd(A',B') \le \delta_3$?
\end{problem}
When $d_1 = d_2 = d_H$, the problem is \npc\ and is called CPS-2H, and when $d_1 = d_2 = \dfd$, the problem is called CPS-3F.

\section{Weighted Chain Pair Simplification (WCPS-3F)} \label{sec:weighted}
We first introduce and consider a more general version of CPS-3F, namely, Weighted CPS-3F.
In the weighted version of the chain pair simplification problem, the vertices of the chains $A$ and $B$ are assigned arbitrary
weights, and, instead of limiting the length of the simplifications, one limits their weights. That is, the total weight of each simplification must not exceed a given value.
The problem is formally defined as follows.

\begin{problem}[Weighted Chain Pair Simplification] \hfill \\
    \textbf{Instance:} Given a pair of 3D chains $A$ and $B$, with lengths $m$ and $n$,
    respectively, an integer $k$, three real numbers $\delta_1,\delta_2,\delta_3>0$,
    and a weight function $C:\{a_1,\ldots,a_m,b_1,\ldots,b_n\} \rightarrow \reals^+$.\\
    \textbf{Problem:} Does there exist a pair of chains $A'$,$B'$ with $C(A'),C(B') \leq k$,
    such that the vertices of $A'$,$B'$ are from $A,B$ respectively, $d_1(A,A') \leq \delta_1$,
    $d_2(B,B') \leq \delta_2$, and $\dfd(A',B') \leq \delta_3$?
\end{problem}
When $d_1 = d_2 = \dfd$, the problem is called WCPS-3F.
When $d_1 = d_2 = d_H$, the problem is \npc, since the non-weighted version (i.e., CPS-2H) is already \npc~\cite{BeregJWYZ08}.

We prove that WCPS-3F is weakly \npc\ via a reduction from the \emph{set partition} problem:
Given a set of positive integers $S=\{s_1,\dots,s_n\}$,
find two sets $P_1,P_2 \subset S$ such that $P_1 \cap P_2 = \emptyset$,
$P_1 \cup P_2 = S$, and the sum of the numbers in $P_1$ equals the sum of the numbers in $P_2$.
This is a weakly \npc\ special case of the classic subset-sum problem.

Our reduction builds two curves with weights reflecting the values in $S$. We think of the two
curves as the subsets of the partition of $S$. Although our problem requires positive weights,
we also allow zero weights in our reduction for clarity. Later, we show how to remove these weights by
slightly modifying the construction.

\begin{figure}[htb]
    \centering \includegraphics[scale=0.25]{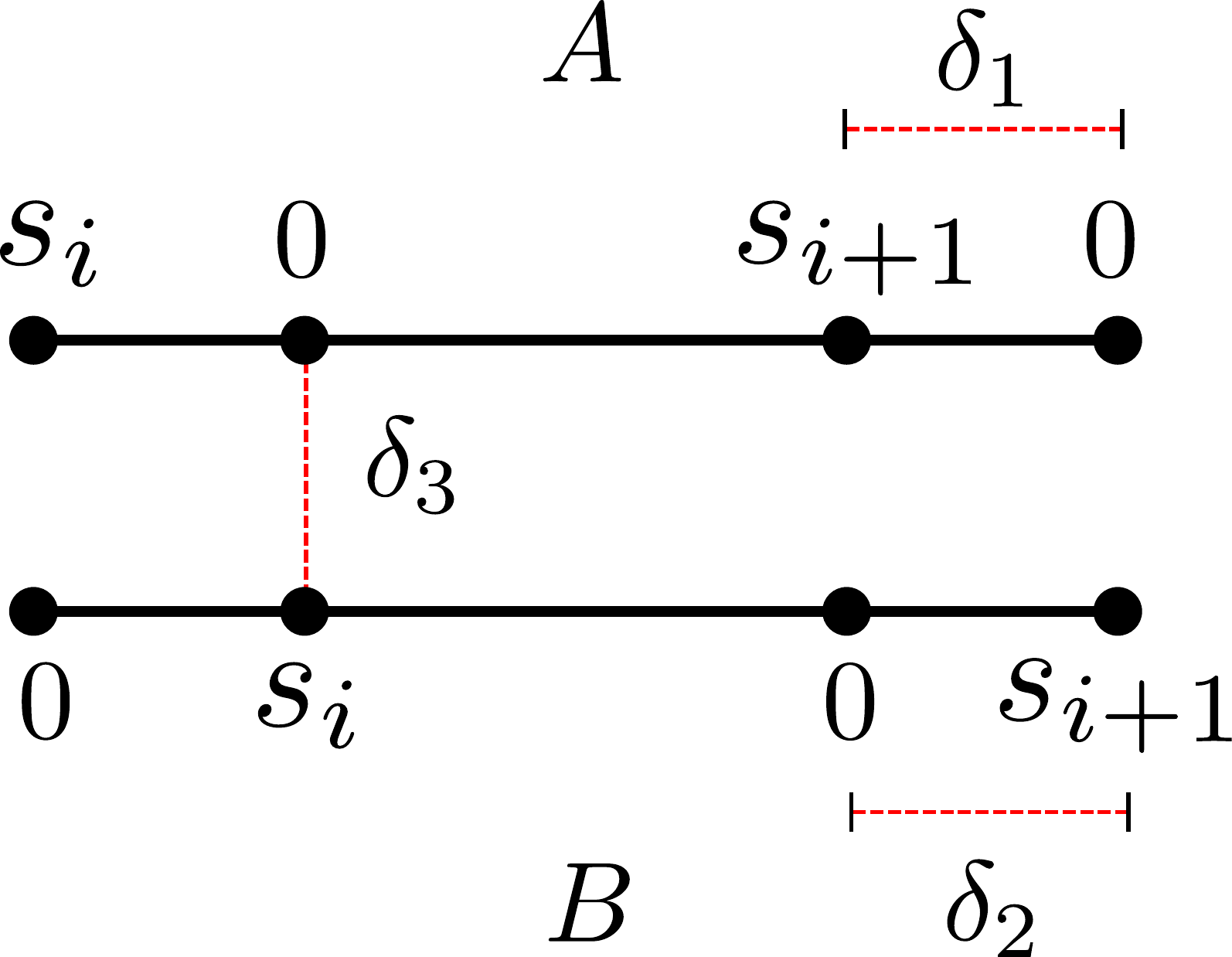}
    \centering \caption{The reduction for the weighted chain pair simplification problem under the discrete \frechet\ distance.}
    \label{fig:reduction}
\end{figure}

\begin{theorem}
    The weighted chain pair simplification problem under the discrete \frechet\ distance is weakly \npc.
\end{theorem}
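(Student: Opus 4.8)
Membership in \textbf{NP} is immediate: a certificate is the pair $(A',B')$, and one checks in polynomial time that the vertices of $A'$ (resp.~$B'$) form a subsequence of $A$ (resp.~$B$), that $C(A'),C(B')\le k$, and that $\dfd(A,A')\le\delta_1$, $\dfd(B,B')\le\delta_2$, $\dfd(A',B')\le\delta_3$, the last three via the classical $O(mn)$-time discrete \frechet\ algorithm. For the hardness part the plan is to give a polynomial many-one reduction from \textsc{Set-Partition} (which is weakly \npc). Write $S=\{s_1,\dots,s_n\}$ and $T=\sum_i s_i$; one may assume $T$ is even and $s_i\le T/2$ for all $i$, since otherwise no equal partition exists. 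I would build two weighted chains $A,B$ on the real line, set $k=T/2$, and pick thresholds $\delta_1,\delta_2,\delta_3$, so that the ``element tokens'' surviving in $A'$ and the tokens surviving in $B'$ are forced to be exactly the two sides of an equal partition of $S$.

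For the construction I would place $n$ pairwise far apart \emph{sites} along the line (site $i$ near coordinate $iM$ for a large separation $M\gg T$), and at site $i$ insert into each of $A$ and $B$ a constant number of vertices: one \emph{token} of weight $s_i$, together with a few weight-$0$ \emph{connector} vertices at coordinates obtained from $iM$ by perturbations of size $O(s_i)$. The connectors should be arranged so that (i) any subsequence $A'$ with $\dfd(A,A')\le\delta_1$ must keep enough connectors to ``cover'' each site, after which the only remaining freedom is a per-site binary choice --- keep the token (adding $s_i$ to $C(A')$) or discard it (adding $0$) --- and symmetrically for $B'$ with $\delta_2$; and (ii), writing $Q_A,Q_B\subseteq\{1,\dots,n\}$ for the index sets of tokens that survive in $A',B'$, the bound $\dfd(A',B')\le\delta_3$ is attainable \emph{exactly when} $Q_A\cup Q_B=\{1,\dots,n\}$. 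I would take $\delta_1=\delta_2$ and $\delta_3$ to be small fixed multiples of $\max_i s_i$; the separation $M$ is there to force every monotone coupling between $A'$ and $B'$ to decompose site by site, localising the whole analysis to one site.

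The step I expect to be the real obstacle is engineering the site gadget so that property (ii) holds: discarding the $i$-th token from \emph{both} $A'$ and $B'$ must create a pair of vertices in $A'\times B'$ that no monotone coupling can match within $\delta_3$, whereas keeping that token in at least one of the two chains still leaves a valid coupling. (The danger is that in the ``both discarded'' case $A'$ and $B'$ sit at nearly coincident connector positions and couple for free; the perturbations of the connector coordinates have to make ``both discarded'' geometrically expensive without also penalising ``kept in exactly one chain''.) Granting the gadget, correctness is a short arithmetic argument: the feasibility constraints reduce to $\sum_{i\in Q_A}s_i\le T/2$ and $\sum_{i\in Q_B}s_i\le T/2$ with every choice of $Q_A,Q_B$ allowed by the $\delta_1,\delta_2$ bounds, so for a valid solution (ii) gives $Q_A\cup Q_B=\{1,\dots,n\}$, hence $\sum_{i\in Q_A}s_i+\sum_{i\in Q_B}s_i\ge T$, which together with the two budgets forces both sums to be $T/2$ and $Q_A\cap Q_B=\emptyset$; conversely an equal partition $(P_1,P_2)$ yields a solution by keeping in $A'$ exactly the tokens of $P_1$ (and all connectors) and in $B'$ exactly those of $P_2$.

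Finally, to respect the requirement that $C$ be positive, I would scale all $s_i$ and $k$ by a large integer $L$ exceeding the total number of vertices, give every connector weight $1$, and raise $k$ by the (fixed) number of connectors that a valid simplification is forced to retain; since this extra allowance is $<L$, the scaled budget still forces $\sum_{i\in Q_A}s_i\le T/2$ over the integers and the argument is unchanged. All quantities are polynomial in $n$ and the input numbers, so the reduction is a legitimate polynomial many-one reduction from a weakly \npc\ problem, and WCPS-3F is weakly \npc.
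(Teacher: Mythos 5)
Your NP-membership argument and the choice of \textsc{Set-Partition} as the source problem match the paper, and your closing arithmetic (two budgets of $T/2$ plus a covering condition force an exact partition) is essentially the same as the paper's. But the heart of the reduction --- the gadget that makes the covering condition (your property (ii)) hold --- is exactly the part you leave unconstructed, and the architecture you propose makes it genuinely problematic, not just fiddly. In your scheme the only per-site variability is whether the weight-$s_i$ token survives in each chain, while the zero-weight connectors always survive. Any ``coverage'' witness for infeasibility of the both-tokens-discarded case must be a surviving vertex all of whose $\delta_3$-neighbours in the other chain are tokens; if that witness is an $A$-connector its only neighbour is the $B$-token, so the case ``only the $A$-token kept'' is equally infeasible, and symmetrically if the witness is a $B$-connector. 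So no placement of connectors can make (ii) work through coverage alone; you would have to exploit the non-monotonicity of the discrete \frechet{} distance under vertex deletion (ordering/monotone-coupling effects), which is precisely the ``danger'' you name and never resolve. As it stands, the reduction has a hole at its central step.

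The paper avoids needing such a gadget altogether by a different weighting idea: at each index $i$ it places \emph{two} near-coincident vertices in each chain (at $x$-coordinates $i$ and $i+0.2$, on two parallel horizontal lines), with \emph{complementary} weights --- $s_i$ on the first vertex of the pair in $A$ and weight $0$ on the second, but weight $0$ on the first vertex of the pair in $B$ and $s_i$ on the second. Taking $\delta_1=\delta_2=0.2$ forces each simplification to keep at least one vertex of every pair, and taking $\delta_3=1$ (the vertical gap) forces the kept vertices at index $i$ to be aligned, i.e.\ first-with-first or second-with-second. Hence for every $i$ the weight $s_i$ is necessarily charged to at least one of $C(A'),C(B')$, and the budget $k=\mathfrak{S}/2$ on both chains then forces an exact partition --- no ``infeasible when both discard'' mechanism is needed, because $\delta_1,\delta_2$ already forbid discarding both members of a pair. (Zero weights are removed by adding $1$ everywhere and setting $k=\mathfrak{S}/2+n$, much as you suggest, and the paper notes the construction collapses to one dimension by overlaying the chains with $\delta_3=0$.) If you want to salvage your plan, the fix is to adopt this complementary placement of the weights across the two chains rather than duplicating the same token weight in both.
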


\begin{proof}
Given the set of positive integers $S=\{s_1,\dots, s_n\}$, we construct two curves $A$ and $B$ in the plane, each of length $2n$.
We denote the weight of a vertex $x_i$ by $w(x_i)$.
$A$ is constructed as follows.
The $i$'th odd vertex of $A$ has weight $s_i$, i.e. $w(a_{2i-1})=s_i$, and coordinates $a_{2i-1}=(i,1)$.
The $i$'th even vertex of $A$ has coordinates $a_{2i}=(i+0.2,1)$ and weight zero.
Similarly, the $i$'th odd vertex of $B$ has weight zero and coordinates $b_{2i-1}=(i,0)$, and
the $i$'th even vertex of $B$ has coordinates $b_{2i}=(i+0.2,0)$ and weight $s_i$, i.e. $w(b_{2i})=s_i$.
Figure~\ref{fig:reduction} depicts the vertices $a_{2i-1},a_{2i},a_{2(i+1)-1},a_{2(i+1)}$ of $A$ and $b_{2i-1},b_{2i},b_{2(i+1)-1},b_{2(i+1)}$ of $B$.
Finally, we set $\delta_1=\delta_2=0.2$, $\delta_3=1$, and $k = \mathfrak{S}$,
where $\mathfrak{S}$ denotes the sum of the elements of $S$ (i.e., $\mathfrak{S}=\sum_{j=1}^n s_j$).

We claim that $S$ can be partitioned into two subsets, each of sum $\mathfrak{S}/2$, if and only if $A$ and $B$ can be simplified with
the constraints $\delta_1=\delta_2=0.2$, $\delta_3=1$ and $k=\mathfrak{S}/2$, i.e.,
$C(A'),C(B') \leq \mathfrak{S}/2$.

First, assume that $S$ can be partitioned into sets $S_A$ and $S_B$, such that $\sum_{s \in S_A} s = \sum_{s \in S_B} s = \mathfrak{S}/2$.
We construct simplifications of $A$ and of $B$ as follows.
\[
A' = \{a_{2i-1} \ | \ s_i \in S_A \} \cup \{a_{2i} \ | \ s_i \notin S_A \} \mbox{ and }
B' = \{b_{2i} \ | \ s_i \in S_B \} \cup \{b_{2i-1} \ | \ s_i \notin S_B \}\, .
\]
It is easy to see that $C(A'),C(B') \leq \mathfrak{S}/2$.
Also, since $\{S_A,S_B\}$ is a partition of $S$,
exactly one of the following holds, for any $1 \le i \le n$:
 \begin{enumerate}
 \item $a_{2i-1}\in A', b_{2i-1}\in B'$ \mbox{ and } $a_{2i}\notin A', b_{2i}\notin B'$.
 \item $a_{2i-1}\notin A', b_{2i-1}\notin B'$ \mbox{ and} $a_{2i}\in A', b_{2i}\in B'$.
 \end{enumerate}
This implies that $\dfd(A,A')\le 0.2=\delta_1$, $\dfd(B,B')\le 0.2=\delta_2$ and $\dfd(A',B')\le 1=\delta_3$.

Now, assume there exist simplifications $A', B'$ of $A, B$, such that
$\dfd(A,A') \le \delta_1=0.2$, $\dfd(B,B') \le \delta_2=0.2$, $\dfd(A',B') \le \delta_3=1$, and $C(A'), C(B') \leq k=\mathfrak{S}/2$.
Since $\delta_1=\delta_2=0.2$, for any $1 \le i \le n$, the simplification $A'$ must contain one of $a_{2i-1},a_{2i}$,
and the simplification $B'$ must contain one of $b_{2i-1},b_{2i}$.
Since $\delta_3=1$, for any $i$, at least one of the following two conditions holds: $a_{2i-1} \in A'$ and $b_{2i-1} \in B'$  or
$a_{2i} \in A'$ and $b_{2i} \in B'$.
Therefore, for any $i$, either $a_{2i-1} \in A$ or $b_{2i} \in B$, implying that $s_i$ participates in either $C(A')$ or $C(B')$.
However, since $C(A'), C(B') \le \mathfrak{S}/2$, $s_i$ cannot participate in both $C(A')$ and $C(B')$.
It follows that $C(A') = C(B') = \mathfrak{S}/2$, and we get a partition of $S$ into two sets, each of sum $\mathfrak{S}/2$.

Finally, we note that WCPS-3F is in \textbf{NP}.  For an instance $I$ with chains $A,B$, given simplifications $A',B'$, we can
verify in polynomial time that $\dfd(A,A') \le \delta_1$, $\dfd(B,B') \le \delta_2$, $\dfd(A',B') \le \delta_3$, and $C(A'),C(B') \le k$.
\end{proof}

Although our construction of $A'$ and $B'$ uses zero weights, a simple modification enables us to prove that the problem is weakly \npc\
also when only positive integral weights are allowed. Increase all the weights by 1, that is,
$w(a_{2i-1})=w(b_{2i})=s_i+1$ and $w(a_{2i})=w(b_{2i-1})=1$, for $1 \le i \le n$, and set $k = \mathfrak{S}/2 + n$.
It is easy to verify that our reduction still works.
Finally, notice that we could overlay the two curves choosing $\delta_3=0$ and prove that the problem
is still weakly \npc\ in one dimension.

\section{Chain Pair Simplification (CPS-3F)} \label{sec:cps3f}

We now turn our attention to CPS-3F, which is the special case of WCPS-3F where each vertex has weight one.

We present an algorithm for the minimization version of CPS-3F. That is, we compute the minimum integer $k^*$, such that
there exists a ``walk'', as above, in which each of the dogs makes at most $k^*$ hops.
The answer to the decision problem is ``yes'' if and only if $k^* < k$.

Returning to the analogy of the man and the dog, we can extend it as follows.
Consider a man and his dog connected by a leash of length $\delta_{1}$,
and a woman and her dog connected by a leash of length $\delta_{2}$.
The two dogs are also connected to each other by a leash of length $\delta_{3}$.
The man and his dog are walking on the points of a chain $A$ and the woman and
her dog are walking on the points of a chain $B$. The dogs may skip points.
The problem is to determine whether there exists a ``walk'' of the man and
his dog on $A$ and the woman and her dog on $B$, such that each of the dogs steps on at most $k$ points.

\paragraph*{Overview of the algorithm.}
We say that $(a_{i},a_{p},b_{j},b_{q})$ is a \emph{possible} configuration of the man, woman and the two dogs on the paths $A$ and $B$,
if $d(a_{i},a_{p}) \le \delta_{1}$, $d(b_{j},b_{q}) \le \delta_{2}$ and $d(a_{p},b_{q}) \le \delta_{3}$.
Notice that there are at most $m^{2}n^{2}$ such configurations.
Now, let $G$ be the DAG whose vertices are the possible configurations, such that
there exists a (directed) edge from vertex $u=(a_{i},a_{p},b_{j},b_{q})$ to vertex $v=(a_{i'},a_{p'},b_{j'},b_{q'})$ if and only if
our gang can move from configuration $u$ to configuration $v$.
That is, if and only if $i \le i' \le i+1$, $p \le p'$, $j \le j' \le j+1$, and $q \le q'$.
Notice that there are no cycles in $G$ because backtracking is forbidden.
For simplicity, we assume that the first and last points of $A'$ (resp., of $B'$) are $a_1$ and $a_m$ (resp., $b_1$ and $b_n$),
so the initial and final configurations are $s=(a_1,a_1,b_1,b_1)$ and $t=(a_m,a_m,b_n,b_n)$, respectively.
(It is easy, however, to adapt the algorithm below to the case where the initial and final points of $A'$ and $B'$ are not specified,
%see Remark~\ref{rem:init-final-conf}.)
see remark below.)
Our goal is to find a path from $s$ to $t$ in $G$. However, we want each of our dogs to
step on at most $k$ points, so, instead of searching for any path from $s$ to $t$, we search for
a path that minimizes the value $max\{|A'|,|B'|\}$,
and then check if this value is at most $k$.

For each edge $e=(u,v)$, we assign two weights, $w_{A}(e),w_{B}(e)\in\{0,1\}$,
in order to compute the number of hops in $A'$ and in $B'$, respectively.
$w_{A}(u,v)=1$ if and only if the first dog jumps to a new point between configurations
$u$ and $v$ (i.e., $p < p'$), and, similarly,
$w_{B}(u,v)=1$ if and only if the second dog jumps to a new point between $u$ and $v$ (i.e., $q < q'$).
Thus, our goal is to find a path $P$ from $s$ to $t$ in $G$, such that $max\{\underset{e\in P} \sum w_{A}(e),\underset{e\in P}\sum w_{B}(e)\}$
is minimized.

Assume w.l.o.g. that $m \le n$.
Since $|A'| \le m$ and $|B'| \le n$, we maintain, for each vertex $v$ of $G$, an array $X(v)$ of size $m$,
where $X(v)[r]$ is the minimum number $z$ such that $v$ can be reached from $s$
with (at most) $r$ hops of the first dog and $z$ hops of the second dog.
We can construct these arrays by processing the vertices of $G$ in topological order
(i.e., a vertex is processed only after all its predecessors have been processed).
This yields an algorithm of running time $O(m^{3}n^{3}\min\{m,n\})$, as described in Algorithm~\ref{alg:cps3f}.

\begin{algorithm}[!ht]
\small
    \vspace*{.25cm}
    \begin{enumerate}
        \item \label{step:graph}Create a directed graph $G=(V,E)$ with two weight functions $w_{A},\, w_{B}$, such that:
        \begin{itemize}
            \item $V$ is the set of all configurations $(a_{i},a_{p},b_{j},b_{q})$ with $d(a_{i},a_{p})\le\delta_{1}$, $d(b_{j},b_{q})\le\delta_{2}$, and $d(a_{p},b_{q})\le\delta_{3}$.
            \item $E=\{((a_{i},a_{p},b_{j},b_{q}),(a_{i'},a_{p'},b_{j'},b_{q'}))\,|\, i\le i'\le i+1,\, p\le p',\, j\le j'\le j+1,\, q\le q'\}$.
            \item For each $((a_{i},a_{p},b_{j},b_{q}),(a_{i'},a_{p'},b_{j'},b_{q'})) \in E$, set
            \begin{itemize}
                \item $w_{A}((a_{i},a_{p},b_{j},b_{q}),(a_{i'},a_{p'},b_{j'},b_{q'}))=
                   \begin{cases}
                         1, & p<p'\\
                         0, & otherwise
                   \end{cases}$
                \item $w_{B}((a_{i},a_{p},b_{j},b_{q}),(a_{i'},a_{p'},b_{j'},b_{q'}))=
                   \begin{cases}
                         1, & q<q'\\
                         0, & otherwise
                   \end{cases}$
            \end{itemize}
        \end{itemize}
        \item \label{step:sort}Sort $V$ topologically.
%        \item \label{step:no-sol} If $G$ does not contain a path from $s$ to $t$, then return ``no solution''.
        \item \label{step:init}Initialize the array $X(s)$ (i.e., set $X(s)[r] = 0$, for $r = 0,\ldots,m-1$).
        \item \label{step:calc}For each $v \in V \setminus \{s\}$ (advancing from left to right in the sorted sequence) do:
        \begin{enumerate}
            \item Initialize the array $X(v)$ (i.e., set $X(v)[r] = \infty$, for $r = 0,\ldots,m-1$).
            \item For each $r$ between $0$ and $m-1$, compute $X(v)[r]$:
        \end{enumerate}
            $\hspace{1cm} X(v)[r]=
            \underset{\begin{array}{c}(u,v)\in E\end{array}}{\min}
            \begin{cases}
                X(u)[r]+w_{B}(u,v), & w_{A}(u,v)=0\\
                X(u)[r-1]+w_{B}(u,v), & w_{A}(u,v)=1
            \end{cases}$
        \item \label{step:return}Return $k^* = \underset{r}{\min}\max\{r,\, X(t)[r]\}$\,.
    \end{enumerate}
    \caption{CPS-3F}
    \label{alg:cps3f}
\end{algorithm}

\paragraph*{Running time.}
The number of vertices in $G$ is $|V|=O(m^2 n^2)$.
By the construction of the graph, for any vertex $(a_{i},a_{p},b_{j},b_{q})$ the maximum number of outgoing edges is $O(mn)$.
So we have $|E|=O(|V|mn)=O(m^3 n^3)$. Thus, constructing the graph $G$ in Step~\ref{step:graph} takes $O(n^3m^3)$ time.
Step~\ref{step:sort} takes $O(|E|)$ time, while Step~\ref{step:init} takes $O(m)$ time.
In Step~\ref{step:calc}, for each vertex $v$ and for each index $r$, we consider all configurations that can directly precede $v$.
So each edge of $G$ participates in exactly $m$ minimum computations, implying that Step~\ref{step:calc} takes $O(|E|m)$ time.
Step~\ref{step:return} takes $O(m)$ time.
Thus, the total running time of the algorithm is $O(m^4 n^3)$.

\begin{theorem}
The chain pair simplification problem under the discrete \frechet\ distance (CPS-3F) is polynomial, i.e., CPS-3F $\in$ \textbf{P}.
\end{theorem}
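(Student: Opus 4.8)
The plan is to prove the theorem by establishing correctness and polynomial running time of Algorithm~\ref{alg:cps3f}. The core idea, already sketched above, is to recast the decision problem as an optimization problem on the DAG $G$ whose vertices are the possible configurations $(a_i,a_p,b_j,b_q)$, and then to compute, by dynamic programming over $G$ in topological order (which is well defined since $G$ is acyclic), the minimum value $k^*$ of $\max\{|A'|,|B'|\}$ taken over all valid pairs of simplifications; the answer to CPS-3F is then ``yes'' iff $k^* < k$ (up to the hops-versus-points bookkeeping fixed in the overview).

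First I would make precise the correspondence between valid simplification pairs and $s$--$t$ paths in $G$. Given simplifications $A'=(a_{p_1},\ldots,a_{p_a})$ and $B'=(b_{q_1},\ldots,b_{q_b})$ with $\dfd(A,A')\le\delta_1$, $\dfd(B,B')\le\delta_2$, $\dfd(A',B')\le\delta_3$, I would fix discrete \frechet\ traversals witnessing each of the three inequalities and interleave them into a single monotone walk through $4$-tuples $(a_i,a_p,b_j,b_q)$, where the man moves on $A$ (index $i$), his dog on $A'$ (index $p$), the woman on $B$ (index $j$), and her dog on $B'$ (index $q$). By advancing one coordinate at a time, consecutive tuples differ exactly by the edge relation of $G$; every tuple visited is a \emph{possible} configuration, since $\dfd(A,A')\le\delta_1$ bounds $d(a_i,a_p)$, symmetrically $d(b_j,b_q)$, and $\dfd(A',B')\le\delta_3$ bounds $d(a_p,b_q)$; and the walk runs from $s=(a_1,a_1,b_1,b_1)$ to $t=(a_m,a_m,b_n,b_n)$. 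Along this path $\sum_{e} w_A(e)$ counts the hops of the first dog (one less than $|A'|$), and likewise $\sum_e w_B(e)$ for $B'$. Conversely, reading off the $p$- and $q$-coordinates along any $s$--$t$ path yields a valid pair $(A',B')$. The one subtlety is that three separate \frechet\ traversals must be merged into one common monotone schedule; this works because a discrete \frechet\ traversal only ever requires ``advance $x$, or advance $y$, or advance both,'' so one can repeatedly advance any coordinate that some traversal wants to move while stalling the others, never violating monotonicity and never leaving the set of possible configurations.

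Next I would verify the dynamic program. The claimed invariant is that $X(v)[r]$ is the minimum number of second-dog hops over all $s$--$v$ paths using at most $r$ first-dog hops. I would prove it by induction in topological order: $X(s)[r]=0$ for all $r\ge 0$, and for $v\neq s$ the recurrence in Step~\ref{step:calc}, which minimizes over in-edges $(u,v)$ either $X(u)[r]+w_B(u,v)$ (when $w_A(u,v)=0$) or $X(u)[r-1]+w_B(u,v)$ (when $w_A(u,v)=1$), is exactly the Bellman update, since the $w_A$- and $w_B$-sums are additive and monotone along paths. Because $\max\{\sum w_A,\sum w_B\}$ is not additive, the parametrization by $r$ is essential: minimizing one coordinate subject to each fixed bound on the other, and then taking $\min_r\max\{r,X(t)[r]\}$ in Step~\ref{step:return}, correctly recovers $k^*=\min_P\max\{\sum_{e\in P}w_A(e),\sum_{e\in P}w_B(e)\}$. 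Hence $k^*$ is computed correctly, and comparing it with $k$ solves the decision problem.

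Finally, for the running time I would invoke the counts already established: $|V|=O(m^2n^2)$, each configuration has $O(mn)$ out-edges so $|E|=O(m^3n^3)$, and (w.l.o.g.\ $m\le n$) each array has length $m$; so building $G$ costs $O(m^3n^3)$, the topological sort $O(|E|)$, and Step~\ref{step:calc} touches each edge in $O(m)$ minimum computations for a total of $O(|E|m)=O(m^4n^3)$, which dominates. Thus the algorithm runs in $O(m^4n^3)$ time, a polynomial, and CPS-3F $\in$ \textbf{P}. (The adaptation to unspecified endpoints of $A',B'$, and the sharper $O(m^2n^2\min\{m,n\})$ bound, are deferred to the next section.) The step I expect to be the real crux is the merging-of-traversals argument in the correctness proof: getting the monotone synchronization of the three discrete \frechet\ walks exactly right and confirming that every intermediate $4$-tuple stays a \emph{possible} configuration.
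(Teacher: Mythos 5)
Your proposal is correct and follows essentially the same route as the paper: the same configuration DAG, the same arrays $X(v)[r]$ with the same Bellman-type recurrence and the $\min_r\max\{r,X(t)[r]\}$ extraction, and the same $O(m^4n^3)$ accounting. The extra detail you supply (the explicit correspondence between simplification pairs and $s$--$t$ paths via merging the three discrete \frechet\ traversals, and the induction for the DP invariant) is exactly what the paper leaves implicit in the man-and-dogs analogy, and your merging sketch is sound.
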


\begin{remark}
\label{rem:init-final-conf}
As mentioned, we have assumed that the first
and last points of $A'$ (resp., $B'$) are $a_1$ and $a_m$ (resp., $b_1$ and $b_n$), so we have
a single initial configuration (i.e., $s=(a_1,a_1,b_1,b_1)$) and
a single final configuration (i.e., $t=(a_m,a_m,b_n,b_n)$).
However, it is easy to adapt our algorithm to the case where the first and last points of the chains $A'$ and $B'$ are not specified.
In this case, any possible configuration of the form $(a_1,a_p,b_1,b_q)$ is considered a potential initial configuration,
and any possible configuration of the form
$(a_m,a_p,b_n,b_q)$ is considered a potential final configuration, where $1\le p\le m$ and $1\le q\le n$.
Let $S$ and $T$ be the sets of potential initial and final configurations, respectively.
(Then, $|S|=O(mn)$ and $|T|=O(mn)$.)
We thus remove from $G$ all edges entering a potential initial configuration, so that each such configuration becomes a ``root'' in the (topologically) sorted sequence.
Now, in Step~\ref{step:init} we initialize the arrays of each $s\in S$ in total time $O(m^2n)$, and in Step~\ref{step:calc} we only process the vertices that are not in $S$.
The value $X(v)[r]$ for such a vertex $v$ is now the minimum number $z$ such that $v$ can be reached from $s$ with $r$ hops of the first dog and $z$ hops of the second dog, over $\emph{all}$ potential initial configurations $s \in S$.
In the final step of the algorithm, we calculate the value $k^{*}$ in $O(m)$ time, for each potential final configuration $t\in T$.
The smallest value obtained is then the desired value.
Since the number of potential final configurations is only $O(mn)$, the total running time of the final step of the algorithm is only $O(m^2n)$, and the running time of the entire algorithm remains $O(m^4n^3)$.
\end{remark}

\subsection{The weighted version}
Weighted CPS-3F, which was shown to be weakly \npc\ in the previous section, can be solved in a similar manner, albeit with running time that depends on the number of different point weights in chain $A$ (alternatively, $B$).
We now explain how to adapt our algorithm to the weighted case.
We first redefine the weight functions $w_{A}$ and $w_{B}$ (where $C(x)$ is the weight of point $x$):
\begin{itemize}
		\item $w_{A}((a_{i},a_{p},b_{j},b_{q}),(a_{i'},a_{p'},b_{j'},b_{q'}))=
		          \begin{cases}
                C(a_{p'}), & p<p'\\
                0, & otherwise
              \end{cases}$
    \item $w_{B}=((a_{i},a_{p},b_{j},b_{q}),(a_{i'},a_{p'},b_{j'},b_{q'}))
              \begin{cases}
                C(b_{q'}), & q<q'\\
                0, & otherwise
              \end{cases}$
\end{itemize}
Next, we increase the size of the arrays $X(v)$ from $m$ to the number of different weights that can be obtained by a subset of $A$ (alternatively, $B$).
(For example, if $|A|=3$ and $C(a_1) = 2$, $C(a_2) = 2$, and $C(a_3) = 4$, then the weights that can be obtained are $2,4,2+4=6,2+2+4=8$, so the size of the arrays would be 4.) Let $c[r]$ be the $r$'th largest such weight. Then $X(v)[r]$ is the minimum number $z$, such that $v$ can be reached from $s$ with hops of total weight (at most) $c[r]$ of the first dog and hops of total weight $z$ of the second dog. $X(v)[r]$ is calculated as follows:
\[
            \hspace{1cm}
            X(v)[r]=
            \underset{\begin{array}{c}(u,v)\in E\end{array}}{\min}
            \begin{cases}
                X(u)[r]+w_{B}(u,v), & w_{A}(u,v)=0\\
                X(u)[r']+w_{B}(u,v), & w_{A}(u,v)>0
            \end{cases}
            \ ,
\]
where $c[r']=c[r]-w_{A}(u,v)$.
If the number of different weights that can be obtained by a subset of $A$ (alternatively, $B$) is $f(A)$ (resp., $f(B)$), then the running time is $O(m^{3}n^{3}f(A))$ (resp., $O(m^{3}n^{3}f(B))$), since the only change that affects the running time is the size of the arrays $X(v)$.
We thus have
\begin{theorem}
The weighted chain pair simplification problem under the discrete \frechet\ distance (Weighted CPS-3F) (and its corresponding minimization problem) can be solved in $O(m^{3}n^{3}\min\{f(A),f(B)\})$ time, where $f(A)$ (resp., $f(B)$) is the number of different weights that can be obtained by a subset of $A$ (resp., $B$). In particular, if only one of the chains, say $B$, has points with non-unit weight, then $f(A)=O(m)$, and the running time is polynomial; more precisely, it is $O(m^{4}n^{3})$.
\end{theorem}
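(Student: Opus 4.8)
The plan is to derive the weighted algorithm as a direct modification of Algorithm~\ref{alg:cps3f} and to prove correctness and the running time by adapting the arguments already used for CPS-3F. Assume w.l.o.g.\ that $f(A)\le f(B)$, so that the arrays are indexed by the attainable weights of subsets of $A$. First I would keep the configuration DAG $G=(V,E)$ exactly as before: its edge set depends only on the metric constraints $d(a_i,a_p)\le\delta_1$, $d(b_j,b_q)\le\delta_2$, $d(a_p,b_q)\le\delta_3$ and on monotonicity, not on the weights. I would then replace the $\{0,1\}$-valued edge weights by point weights, $w_A(u,v)=C(a_{p'})$ when the first dog advances (i.e.\ $p<p'$) and $0$ otherwise, and symmetrically for $w_B$. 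Next I would replace the length-$m$ array $X(v)$ by an array indexed $1,\dots,f(A)$, where index $r$ stands for the $r$-th largest value $c[r]$ of the form $\sum_{a\in S'}C(a)$ over subsets $S'\subseteq A$; enumerating and sorting this set of attainable values (and storing it in a hash table for $O(1)$ lookups) is a one-time preprocessing step. The intended invariant, to be proved by induction over the topological order of $G$, is that $X(v)[r]$ is the minimum total $B$-weight $z$ with which $v$ is reachable from $s$ using first-dog hops of total weight at most $c[r]$.

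The correctness proof is then the same induction as for CPS-3F, with one genuinely new point. When an incoming edge $(u,v)$ has $w_A(u,v)>0$, the recurrence must combine it with $X(u)[r']$ where $c[r']=c[r]-w_A(u,v)$, and I must argue that this index $r'$ exists and is correct. It exists because any subset of $A$ of weight $c[r]$ that realizes the first-dog budget at $v$ and contains $a_{p'}$ yields, upon deleting $a_{p'}$, a subset of weight $c[r]-C(a_{p'})$, so this value is itself attainable and hence appears in the sorted list; it is correct because the prefix of the walk reaching $u$ uses exactly the first-dog points other than (at most) $a_{p'}$, i.e.\ first-dog weight at most $c[r']$. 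Monotonicity of the arrays $X(v)[\cdot]$ along the topological order, together with the minimum over all incoming edges, handles the ``at most'' slack exactly as in the unit-weight case; and the base case $X(s)[\,\cdot\,]=0$ and the final extraction $k^{*}=\min_r\max\{c[r],X(t)[r]\}$ carry over verbatim (with the obvious analogue of Remark~\ref{rem:init-final-conf} for unspecified endpoints).

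For the running time, nothing in $G$ changes: $|V|=O(m^2n^2)$, every configuration has $O(mn)$ out-neighbours so $|E|=O(m^3n^3)$, and the only quantity that grows is the array length, now $f(A)$ rather than $m$. Each edge of $G$ participates in $f(A)$ minimum computations, each with an $O(1)$ shifted-index lookup, so Step~\ref{step:calc} costs $O(|E|\,f(A))=O(m^3n^3 f(A))$ and dominates; by the symmetric choice this is $O(m^3n^3\min\{f(A),f(B)\})$. If only $B$ carries non-unit weights, every subset of $A$ has integer weight in $\{0,1,\dots,m\}$, so $f(A)\le m+1=O(m)$ and the bound becomes $O(m^4n^3)$, matching the unweighted algorithm; in general $f(A),f(B)\le\mathfrak{S}+1$, so the algorithm is pseudo-polynomial, consistent with the weak \textbf{NP}-completeness proved in Section~\ref{sec:weighted}. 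The step I expect to be the main obstacle is precisely the index-shift argument of the second paragraph: verifying that $c[r]-w_A(u,v)$ is always an attainable subset weight and that indexing by attainable subset-sums (rather than by all integers up to $\mathfrak{S}$) does not break the dynamic program, since this is exactly where the weighted recurrence departs from the clean $r\mapsto r-1$ shift of Algorithm~\ref{alg:cps3f}.
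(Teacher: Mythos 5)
Your proposal follows essentially the same route as the paper: keep the configuration DAG unchanged, replace the $\{0,1\}$ edge weights by $C(a_{p'})$ and $C(b_{q'})$, index the arrays $X(v)$ by the sorted attainable subset weights $c[1],\dots,c[f(A)]$ of the lighter-indexed chain, use the shifted recurrence $X(u)[r']$ with $c[r']=c[r]-w_A(u,v)$, and observe that only the array length changes in the running-time analysis, giving $O(m^3n^3\min\{f(A),f(B)\})$ and $O(m^4n^3)$ in the unit-weight-$A$ case. The index-shift point you flag as the main obstacle is treated no more carefully in the paper (which simply writes $c[r']=c[r]-w_A(u,v)$), and it is harmless in any case: since $X(u)[\cdot]$ is monotone in the budget and the actual prefix weight is itself an attainable subset weight not exceeding $c[r]-w_A(u,v)$, one may take $c[r']$ to be the largest attainable value at most $c[r]-w_A(u,v)$ without affecting correctness or the time bound.
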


\begin{remark}
We presented an algorithm that minimizes $\max\{|A'|,|B'|\}$ given the error parameters $\delta_1, \delta_2, \delta_3$.
Another optimization version of CPS-3F is to minimize, e.g., $\delta_3$ (while obeying the requirements specified by $\delta_1, \delta_2$ and $k$).
It is easy to see that Algorithm~\ref{alg:cps3f} can be adapted to solve this version within roughly the same time bound.
\end{remark}

\section{An Efficient Implementation} \label{sec:alg2}

The time and space complexity of Algorithm~\ref{alg:cps3f} (which is $O(m^{3}n^{3}\min\left\{ m,n\right\})$ and $O(m^3n^3)$, respectively)
makes it impractical for our motivating biological application (as $m, n$ could be 500$\sim$600); see Section~\ref{sec:empirical}.
In this section, we show how to reduce the time and space bounds by a factor of $mn$, using dynamic programming.

We generate all configurations of the form $(a_{i},a_{p},b_{j},b_{q})$, where the outermost for-loop is governed by $i$, the next level loop by $j$, then $p$, and finally $q$.
When a new configuration $v=(a_{i},a_{p},b_{j},b_{q})$ is generated, we first check whether it is \emph{possible}. If it is not possible, we set $X(v)[r]=\infty$, for $1 \le r \le m$, and if it is, we compute $X(v)[r]$, for $1 \le r \le m$.

We also maintain for each pair of indices $i$ and $j$, three tables $C_{i,j}$, $R_{i,j}$, $T_{i,j}$ that assist us
in the computation of the values $X(v)[r]$:
\begin{alignat*}{1}
C_{i,j}[p,q,r] & =\min_{\substack{{1\le p'\le p}}}X(a_{i},a_{p'},b_{j},b_{q})[r]\\
R_{i,j}[p,q,r] & =\min_{\substack{{1\le q'\le q}}}X(a_{i},a_{p},b_{j},b_{q'})[r]\\
T_{i,j}[p,q,r] & =\min_{\substack{{1\le p'\le p}\\{1\le q'\le q}}}X(a_{i},a_{p'},b_{j},b_{q'})[r]
\end{alignat*}
Notice that the value of cell $[p,q,r]$ is determined by the value of one or two previously-determined cells and
$X(a_{i},a_{p},b_{j},b_{q})[r]$ as follows:
\begin{alignat*}{1}
C_{i,j}[p,q,r] & =\min\{C_{i,j}[p-1,q,r],X(a_{i},a_{p},b_{j},b_{q})[r]\}\\
R_{i,j}[p,q,r] & =\min\{R_{i,j}[p,q-1,r],X(a_{i},a_{p},b_{j},b_{q})[r]\}\\
T_{i,j}[p,q,r] & =\min\{T_{i,j}[p-1,q,r],T_{i,j}[p,q-1,r],X(a_{i},a_{p},b_{j},b_{q})[r]\}
\end{alignat*}

Observe that in any configuration
that can immediately precede the current configuration $(a_{i},a_{p},b_{j},b_{q})$, the man is either at $a_{i-1}$ or at $a_i$ and the woman is either at $b_{j-1}$ or at $b_j$ (and the dogs are at $a_{p'}$, $p' \le p$, and $b_{q'}, q' \le q$, respectively).
The ``saving'' is achieved, since now we only need to access a constant number of table entries in order to compute
the value $X(a_{i},a_{p},b_{j},b_{q})[r]$.

\old{
The running time of the above algorithm for CPS-3F is extremely high.
We now show how to reduce the running time using a dynamic programming algorithm.

In order to do this, we generate \emph{all} the configurations of the form $(a_{i},a_{p},b_{j},b_{q})$, where the first level for-loop is governed by $i$, the second level loop by $j$, the third by $p$, and the fourth by $q$.
When a new configuration $v=(a_{i},a_{p},b_{j},b_{q})$ is generated, we first check whether it is valid. If it is not valid, we set $X(v)[r]=\infty$, for $1\le r\le m$, and, if it is valid, we compute $X(v)[r]$, for $1\le r\le m$.

We also maintain for each pair of indices $i$ and $j$, three tables $C_{i,j}$, $R_{i,j}$, $T_{i,j}$, that assist us in the computation of the values $X(v)[r]$:
\begin{alignat*}{1}
C_{i,j}[p,q,r] & =\min_{\substack{{1\le p'\le p}}}X(a_{i},a_{p'},b_{j},b_{q})[r]\\
R_{i,j}[p,q,r] & =\min_{\substack{{1\le q'\le q}}}X(a_{i},a_{p},b_{j},b_{q'})[r]\\
T_{i,j}[p,q,r] & =\min_{\substack{{1\le p'\le p}\\{1\le q'\le q}}}X(a_{i},a_{p'},b_{j},b_{q'})[r].
\end{alignat*}

Notice that the value of cell $[p,q,r]$ is determined by the value of one or two previous cells and $X(a_{i},a_{p},b_{j},b_{q})[r]$ as follows:
\begin{alignat*}{1}
C_{i,j}[p,q,r] & =\min\{C_{i,j}[p-1,q,r],X(a_{i},a_{p},b_{j},b_{q})[r]\}\\
R_{i,j}[p,q,r] & =\min\{R_{i,j}[p,q-1,r],X(a_{i},a_{p},b_{j},b_{q})[r]\}\\
T_{i,j}[p,q,r] & =\min\{T_{i,j}[p-1,q,r],T_{i,j}[p,q-1,r],X(a_{i},a_{p},b_{j},b_{q})[r]\}.
\end{alignat*}

Observe that in any configuration that can immediately
precede the current configuration
$(a_{i},a_{p},b_{j},b_{q})$, the man is either
at $a_{i-1}$ or at $a_i$ and the woman is either
at $b_{j-1}$ or at $b_j$ (and the dogs are at $a_{k'}$, $k'\le k$, and $b_{l'}$, $l'\le l$, respectively). The “saving” is achieved, since now we only need to access a constant number of table entries in order to compute the value $X(a_{i},a_{p},b_{j},b_{q})[r]$.
}

One can illustrate the algorithm using the matrix in Figure \ref{fig:matrix}.
There are $mn$ large cells, each of them containing a matrix of size
$mn$. The large cells correspond to the positions of the man and
the woman. The inner matrices correspond to the positions of the
two dogs (for given positions of the man and woman). Consider an
optimal ``walk'' of the gang that ends at cell $(a_{i},a_{p},b_{j},b_{q})$
(marked by a full circle), such that the first dog has visited $r$ points.
The previous cell in this ``walk'' must be
in one of the 4 large cells $(a_{i},b_{j})$,$(a_{i-1},b_{j})$,$(a_{i},b_{j-1})$,$(a_{i-1},b_{j-1})$.
Assume, for example, that it is in $(a_{i-1},b_{j})$.
Then, if it is in the blue area,
then $X(a_{i},a_{p},b_{j},b_{q})[r]=C_{i-1,j}[p-1,q,r-1]$ (marked by an empty square), since
only the position of the first dog has changed when the gang moved to $(a_{i},a_{p},b_{j},b_{q})$.
If it is in the purple area, then
$X(a_{i},a_{p},b_{j},b_{q})[r]=R_{i-1,j}[p,q-1,r]+1$ (marked by a x),
since only the position of the second dog has changed. If it is in
the orange area, then $X(a_{i},a_{p},b_{j},b_{q})[r]=T_{i-1,j}[p-1,q-1,r-1]+1$ (marked by an empty circle),
since the positions of both dogs have changed. Finally, if it is the cell marked by the full square,
then simply $X(a_{i},a_{p},b_{j},b_{q})[r]=X(a_{i-1},a_{p},b_{j},b_{q})[r]$, since
both dogs have not moved.
The other three cases, in which the previous cell is in one of the 3 large cells
$(a_{i},b_{j})$,$(a_{i},b_{j-1})$,$(a_{i-1},b_{j-1})$, are handled similarly.

\begin{figure}[thb]
  \begin{center}
    \includegraphics[scale=1.2]{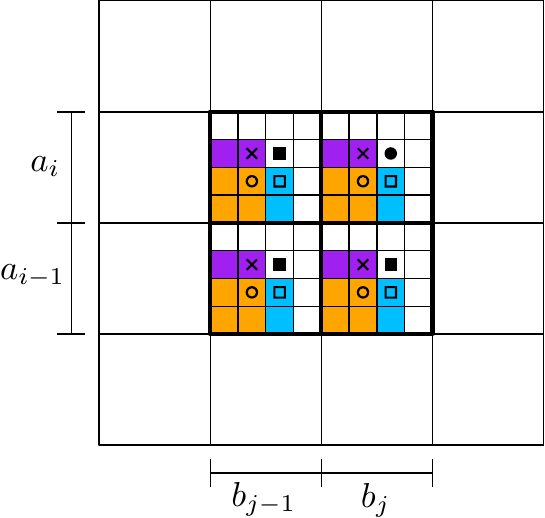}
  \end{center}
  \caption{Illustration of Algorithm~\ref{alg:cps3fdp}.}
  \label{fig:matrix}
\end{figure}

We are ready to present the dynamic programming algorithm.
The initial configurations correspond to cells in the large cell $(a_1,b_1)$.
For each initial configuration $(a_{1},a_{p},b_{1},b_{q})$, we set $X(a_{1},a_{p},b_{1},b_{q})[1]=1$.

\begin{algorithm}[H]
\begin{description}
\item [{for}] $i=1$ to $m$
\begin{description}
\item [{for}] $j=1$ to $n$
\begin{description}
\item [{for}] $p=1$ to $m$
\begin{description}
\item [{for}] $q=1$ to $n$
\begin{description}
\item [{for}] $r=1$ to $m$

$\begin{aligned}
&X_{(-1,0)}&=&\min\begin{cases}
C_{i-1,j}[p-1,q,r-1]\\
R_{i-1,j}[p,q-1,r]+1\\
T_{i-1,j}[p-1,q-1,r-1]+1\\
X(a_{i-1},a_{p},b_{j},b_{q})[r]
\end{cases}\\
&X_{(0,-1)}&=&\min\begin{cases}
C_{i,j-1}[p-1,q,r-1]\\
R_{i,j-1}[p,q-1,r]+1\\
T_{i,j-1}[p-1,q-1,r-1]+1\\
X(a_{i},a_{p},b_{j-1},b_{q})[r]
\end{cases}\\
&X_{(-1,-1)}&=&\min\begin{cases}
C_{i-1,j-1}[p-1,q,r-1]\\
R_{i-1,j-1}[p,q-1,r]+1\\
T_{i-1,j-1}[p-1,q-1,r-1]+1\\
X(a_{i-1},a_{p},b_{j-1},b_{q})[r]
\end{cases}\\
&X_{(0,0)}&=&\min\begin{cases}
C_{i,j}[p-1,q,r-1]\\
R_{i,j}[p,q-1,r]+1\\
T_{i,j}[p-1,q-1,r-1]+1
\end{cases}\\
\end{aligned}$

$X(a_{i},a_{p},b_{j},b_{q})[r]=\min\{X_{(-1,0)},X_{(0,-1)},X_{(-1,-1)},X_{(0,0)}\}$\\

%\begin{alignat*}{1}
$\begin{aligned}
&C_{i,j}[p,q,r] & =&\min\{C_{i,j}[p-1,q,r],X(a_{i},a_{p},b_{j},b_{q})[r]\}\\
&R_{i,j}[p,q,r] & =&\min\{R_{i,j}[p,q-1,r],X(a_{i},a_{p},b_{j},b_{q})[r]\}\\
&T_{i,j}[p,q,r] & =&\min\{T_{i,j}[p-1,q,r],T_{i,j}[p,q-1,r],X(a_{i},a_{p},b_{j},b_{q})[r]\}
\end{aligned}$
%\end{alignat*}

\end{description}
\end{description}
\end{description}
\end{description}
\item [{return}] $\underset{r,p,q}{\min}\max\{r,X(a_{m},a_{p},b_{n},b_{q})[r]\}$
\end{description}
\caption{CPS-3F using dynamic programming}
\label{alg:cps3fdp}
\end{algorithm}

\begin{theorem}
The minimization version of the chain pair simplification problem under the discrete \frechet\ distance (CPS-3F) can be solved in $O(m^{2}n^{2}\min\left\{ m,n\right\})$ time.
\end{theorem}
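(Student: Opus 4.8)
The plan is to show that the dynamic-programming recurrence in Algorithm~\ref{alg:cps3fdp} correctly computes the same quantities $X(v)[r]$ as Algorithm~\ref{alg:cps3f}, but does so while touching only $O(1)$ auxiliary table entries per configuration, and then to bound the total work. Concretely, I would first argue correctness: $X(a_i,a_p,b_j,b_q)[r]$ is, by definition, the minimum number $z$ of hops of the second dog needed to reach this (possible) configuration from an initial configuration using at most $r$ hops of the first dog. Any predecessor $u=(a_{i'},a_{p'},b_{j'},b_{q'})$ of $v=(a_i,a_p,b_j,b_q)$ in $G$ has $i'\in\{i-1,i\}$, $j'\in\{j-1,j\}$, $p'\le p$, $q'\le q$, so the predecessors split into the four ``large cells'' $(a_{i-1},b_j),(a_i,b_{j-1}),(a_{i-1},b_{j-1}),(a_i,b_j)$. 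Within each large cell I would further split according to whether the first dog moved ($p'<p$), the second dog moved ($q'<q$), both moved, or (only in the same large cell $(a_i,b_j)$) neither moved. The key observation is that ``first dog moved, second dog did not'' over all $p'\le p-1$ with $q'=q$ is exactly a prefix-minimum $\min_{p'\le p-1} X(\cdot,a_{p'},\cdot,b_q)[r-1]$, i.e.\ $C_{i',j'}[p-1,q,r-1]$; ``second dog moved, first did not'' is $R_{i',j'}[p,q-1,r]+1$; ``both moved'' is $T_{i',j'}[p-1,q-1,r-1]+1$; and ``neither moved'' (possible only from the same large cell, a true predecessor) is $X(a_{i-1},a_p,b_j,b_q)[r]$ or the analogous term with $b_{j-1}$, plus, for the $(0,0)$ case where neither the man nor the woman moved either, one of $C,R,T$ in the current large cell referring to strictly smaller $p$ or $q$. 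Taking the minimum over these cases and over the four large cells reproduces exactly the $\min$ over all incoming edges in Algorithm~\ref{alg:cps3f}, which establishes $X(v)[r]$ is computed correctly.

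Second, I would verify that the auxiliary tables $C_{i,j},R_{i,j},T_{i,j}$ are maintained consistently with their definitions by the stated incremental updates $C_{i,j}[p,q,r]=\min\{C_{i,j}[p-1,q,r],X(a_i,a_p,b_j,b_q)[r]\}$ and similarly for $R$ and $T$ --- these are immediate from the definitions of prefix minima, provided the loop order ($i$, then $j$, then $p$, then $q$, then $r$) guarantees that every entry referenced on the right-hand side (entries with smaller $i$ or $j$, or the same $(i,j)$ with smaller $p$ or $q$, and the just-computed $X(v)[r]$) has already been finalized. I would point out that possible configurations get their true values and impossible ones are set to $\infty$, so the $\min$ operations correctly ignore non-configurations, and that the base case --- setting $X(a_1,a_p,b_1,b_q)[1]=1$ for each initial configuration --- matches the convention that reaching $(a_1,a_p,b_1,b_q)$ costs one hop for each dog. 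Finally I would read off the answer: $k^*=\min_{r,p,q}\max\{r, X(a_m,a_p,b_n,b_q)[r]\}$, exactly the minimization over potential final configurations described in Remark~\ref{rem:init-final-conf}.

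For the running time, the four nested loops over $i,j,p,q$ give $O(m^2n^2)$ configurations, and the innermost loop over $r$ contributes a factor $m$ (recall $m\le n$ w.l.o.g., so this is $\min\{m,n\}$); for each $(i,j,p,q,r)$ the work --- evaluating the four $X_{(\cdot,\cdot)}$ expressions, each a $\min$ of $O(1)$ table lookups, and performing the $O(1)$ updates to $C,R,T$ --- is constant. Hence the total time is $O(m^2n^2\min\{m,n\})$, and the space is $O(m^2n^2)$ if one keeps all tables, or less with the usual trick of discarding tables for index pairs no longer needed; either way the bound is a factor $mn$ better than Algorithm~\ref{alg:cps3f}. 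The main obstacle, and the part deserving the most care, is the case analysis for the predecessors: one must be sure that every incoming edge of $G$ is covered by exactly the union of the $C$-, $R$-, $T$-, and ``no-move'' terms across the four large cells, that no configuration is double-counted in a way that would matter (harmless for a $\min$) or, more importantly, erroneously omitted, and that the ``neither dog moved'' transitions --- which within the current large cell $(a_i,b_j)$ would create a self-reference --- are handled only via strictly-smaller-$p$-or-$q$ entries of $C,R,T$, so the recurrence remains well-founded under the chosen loop order. I would also double-check the off-by-one in the $r$ index (a first-dog hop consumes one unit of the budget, so we read $[r-1]$) and in the $+1$ terms (a second-dog hop adds one to $z$), since these are exactly where a subtle bug would hide.
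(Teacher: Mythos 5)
Your proposal follows essentially the same route as the paper: the prefix-minimum tables $C_{i,j}$, $R_{i,j}$, $T_{i,j}$, the observation that any predecessor lies in one of the four large cells $(a_i,b_j)$, $(a_{i-1},b_j)$, $(a_i,b_{j-1})$, $(a_{i-1},b_{j-1})$, the case split by which dogs moved (with the $(0,0)$ case restricted to strictly smaller $p$ or $q$ to keep the recurrence well-founded), and the $O(m^2n^2\cdot\min\{m,n\})$ count from five nested loops with $O(1)$ work each. Your correctness discussion is in fact somewhat more explicit than the paper's informal justification, but there is no substantive difference in approach.
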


\section{1-Sided Chain Pair Simplification} \label{sec:simp}

Sometimes, one of the two input chains, say $B$, is much shorter than the other, possibly because it has already been simplified.
In these cases, we only want to simplify $A$, in a way that maintains the resemblance between the two input chains.
We thus define the 1-sided chain pair simplification problem.

\begin{problem}[1-Sided Chain Pair Simplification]\hfill \\
    \textbf{Instance:} Given a pair of polygonal chains $A$ and $B$ of lengths $m$ and $n$,
    respectively, an integer $k$, and two real numbers $\delta_1,\delta_3>0$.\\
    \textbf{Problem:} Does there exist a chain $A'$ of at most $k$ vertices,
    such that the vertices of $A'$ are from $A$, $\dfd(A,A') \le \delta_1$, and $\dfd(A',B) \le \delta_3$?
\end{problem}

The optimization version of this problem can be solved using similar ideas to those used in the solution of the 2-sided problem.
Here a \emph{possible} configuration is a 3-tuple $(a_{i},a_{p},b_{j})$, where $d(a_i,a_p) \le \delta_1$ and $d(a_p,b_j) \le \delta_3$.
We construct a graph and find a shortest path from one of the starting configurations to one of the final configurations; see Algorithm~\ref{alg:cps3f1s}. Arguing as for Algorithm~\ref{alg:cps3f}, we get that $|V|=O(m^2 n)$ and $|E|=O(|V|m)=O(m^3 n)$. Moreover, it is easy to see that
the running time of Algorithm~\ref{alg:cps3f1s} is $O(m^3 n)$, since it does not maintain an array for each vertex.

\begin{algorithm}[!ht]
\small
    \vspace*{.25cm}
    \begin{enumerate}
        \item Create a directed graph $G=(V,E)$ with a weight function $w$, such that:
        \begin{itemize}
            \item $V=\{(a_{i},a_{p},b_{j})\ |\ d(a_{i},a_{p})\le\delta_{1} \ \mbox{and} \ d(a_{p},b_{j})\le\delta_{3}\}$.
            \item $E=\{((a_{i},a_{p},b_{j}),(a_{i'},a_{p'},b_{j'}))\ |\ i\le i'\le i+1,\, p\le p',\, j\le j'\le j+1\}$.
            \item For each $((a_{i},a_{p},b_{j}),(a_{i'},a_{p'},b_{j'})) \in E$, set
                    \[
                        w((a_{i},a_{p},b_{j}),(a_{i'},a_{p'},b_{j'})=
                       \begin{cases}
                             1, & p<p'\\
                             0, & otherwise
                       \end{cases}
                    \]
            \item Let $S$ be the set of starting configurations and let $T$ be the set of final configurations.
        \end{itemize}
        \item Sort $V$ topologically.
        \item Set $X(s)=0$, for each $s \in S$.
        \item For each $v \in V \setminus S$ (advancing from left to right in the sorted sequence) do:
            \[
            X(v)=\underset{(u,v)\in E}{\min}\ \{X(u)+w(u,v)\}.
            \]
        \item Return $k^* = \underset{t\in T}{\min}\ X(t)$.
    \end{enumerate}
    \caption{1-sided CPS-3F}
    \label{alg:cps3f1s}
\end{algorithm}

To reduce the running time we use dynamic programming as in Section~\ref{sec:alg2}.
We generate all configurations of the form $(a_{i},a_{p},b_{j})$.
When a new configuration $v=(a_{i},a_{p},b_{j})$ is generated, we first check whether it is \emph{possible}. If it is not possible, we set $X(v)=\infty$, and if it is, we compute $X(v)$.

We also maintain for each pair of indices $i$ and $j$, a table $A_{i,j}$ that assists us
in the computation of the value $X(v)$:
$$
A_{i,j}[p]=\min_{\substack{{1\le p'\le p}}} X(a_{i},a_{p'},b_{j})\, .
$$
Notice that $A_{i,j}[p]$ is the minimum of $A_{i,j}[p-1]$ and $X(a_{i},a_{p},b_{j})$.

We observe once again that in any configuration
that can immediately precede the current configuration $(a_{i},a_{p},b_{j})$, the man is either at $a_{i-1}$ or at $a_i$ and the woman is either at $b_{j-1}$ or at $b_j$ (and the dog is at $a_{p'}$, $p' \le p$).
The ``saving'' is achieved, since now we only need to access a constant number of table entries in order to compute
the value $X(a_{i},a_{p},b_{j})$.
We obtain the following dynamic programming algorithm whose running time is $O(m^2n)$.

\old{
\begin{figure*}[htp]
  \begin{center}
    \includegraphics{matrix1sided.pdf}
  \end{center}
  \caption{The vertex matrix.}
  \label{fig:matrix1sided}
\end{figure*}

We can illustrate the algorithm using the matrix in Figure \ref{fig:matrix1sided}.
There are $mn$ large cells, each of them contains an array of size $m$.
The large cells correspond to the positions of the man and the woman.
The arrays correspond to the position of the dog. Consider an
optimal ``walk'' of the gang that ends in the position $(a_{i},a_{p},b_{j},b_{q})$
(the black circle). The previous position of the gang correspond to a vertex
that can be located only in one of the 4 cells  $(a_{i},b_{j})$,$(a_{i-1},b_{j})$,$(a_{i},b_{j-1})$,$(a_{i-1},b_{j-1})$.
Moreover, it can only be one of the vertices marked in orange or the
red circles. If, for example, it is located in the left top orange
area, then $C(a_{i},a_{p},b_{j})=A_{i-1,j}[p-1]$, because $A_{i-1,j}[p-1]$
is the minimum number of steps of the dog when position of the man
and the woman is $(a_{i-1},b_{j})$. If it is the left top black circle,
then it is simply $C(a_{i-1},a_{p},b_{j})$ since the dog stayed at
the same position. Symmetrically, this is true for the other 3 large cells.
}

\begin{algorithm}[H]
\begin{description}
\item [{for}] $i=1$ to $m$
\begin{description}
\item [{for}] $j=1$ to $n$
\begin{description}
\item [{for}] $p=1$ to $m$

$
\begin{aligned}
&X_{(-1,0)}&=&\min\begin{cases}
A_{i-1,j}[p-1]+1\\
X(a_{i-1},a_{p},b_{j})
\end{cases}\\
&X_{(0,-1)}&=&\min\begin{cases}
A_{i,j-1}[p-1]+1\\
X(a_{i},a_{p},b_{j-1})
\end{cases}\\
&X_{(-1,-1)}&=&\min\begin{cases}
A_{i-1,j-1}[p-1]+1\\
X(a_{i-1},a_{p},b_{j-1})
\end{cases}\\
&X_{(0,0)}&=&A_{i,j}[p-1]+1\\
&X(a_{i},a_{p},b_{j})&=&\min\{X_{(-1,0)},X_{(0,-1)},X_{(-1,-1)},X_{(0,0)}\}\\
&A_{i,j}[p]&=&\min\{A_{i,j}[p-1], X(a_{i},a_{p},b_{j})\}\\
\end{aligned}
$

\end{description}
\end{description}
\item [{return}] $\underset{p}{\min}\{X(a_{m},a_{p},b_{n})\}$
\end{description}
\caption{1-sided CPS-3F using dynamic programming}
\end{algorithm}

\begin{theorem}
The 1-sided chain pair simplification problem under the discrete \frechet\ distance can be solved in $O(m^{2}n)$ time.
\end{theorem}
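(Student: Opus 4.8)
The plan is to mirror the structure of the two-sided algorithm (Algorithm~\ref{alg:cps3f} and its dynamic-programming refinement in Section~\ref{sec:alg2}), but with one fewer dog. First I would set up the configuration graph $G=(V,E)$ as in Algorithm~\ref{alg:cps3f1s}: a \emph{possible} configuration is a triple $(a_i,a_p,b_j)$ with $d(a_i,a_p)\le\delta_1$ and $d(a_p,b_j)\le\delta_3$, and there is an edge from $(a_i,a_p,b_j)$ to $(a_{i'},a_{p'},b_{j'})$ exactly when $i\le i'\le i+1$, $p\le p'$, and $j\le j'\le j+1$, with weight $1$ iff the dog advances ($p<p'$) and $0$ otherwise. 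Here $a_i$ is the man's position on $A$, $a_p$ the dog's position (the last vertex chosen for $A'$), and $b_j$ the woman's position on $B$; note that because only $A$ is simplified, the ``woman'' has no dog, and the resemblance constraint $\dfd(A',B)\le\delta_3$ is captured by the leash $d(a_p,b_j)\le\delta_3$ together with the $j\le j'\le j+1$ monotonicity that forces every $b_j$ to be visited. A feasible simultaneous walk of the man, the dog, and the woman from some starting configuration to some final configuration corresponds precisely to a simplification $A'$ with $\dfd(A,A')\le\delta_1$ and $\dfd(A',B)\le\delta_3$, and the weight of the walk is $|A'|$; hence $k^*=\min_{t\in T}X(t)$ is the minimum length, and the decision answer is ``yes'' iff $k^*\le k$.

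Next I would argue correctness of the shortest-path formulation. The forward direction: given a feasible $A'$, interleave the discrete \frechet\ couplings witnessing $\dfd(A,A')\le\delta_1$ and $\dfd(A',B)\le\delta_3$ into a single monotone sequence of configurations; each step moves the man by $0$ or $1$, the woman by $0$ or $1$, and the dog forward by $0$ or more (we may collapse consecutive dog-only moves so the dog advances by at most one chosen vertex per recorded step), all constraints being exactly the edge conditions, and the number of distinct dog positions is $|A'|$. The reverse direction: any $s$--$t$ path in $G$ reads off a simplification $A'$ (the sequence of distinct dog vertices) together with valid couplings, so $\dfd(A,A')\le\delta_1$ and $\dfd(A',B)\le\delta_3$; the path weight equals $|A'|$. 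Since $G$ is a DAG (no backtracking), a single topological-order relaxation computes $X(v)$ for all $v$ in $O(|E|)=O(m^3n)$ time, giving Algorithm~\ref{alg:cps3f1s}; crucially, unlike the two-sided case, no array indexed by hop-count is needed, because there is only one dog to count and we are directly minimizing its hop count.

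Then I would improve this to $O(m^2n)$ by the dynamic-programming reorganization already sketched: process configurations in the loop order $i,j,p$, and maintain, for each fixed $(i,j)$, the prefix-minimum table $A_{i,j}[p]=\min_{1\le p'\le p}X(a_i,a_{p'},b_j)$, updated in $O(1)$ by $A_{i,j}[p]=\min\{A_{i,j}[p-1],X(a_i,a_p,b_j)\}$. The key observation is that any configuration immediately preceding $(a_i,a_p,b_j)$ has the man at $a_{i-1}$ or $a_i$, the woman at $b_{j-1}$ or $b_j$, and the dog at some $a_{p'}$ with $p'\le p$; the minimum over all dog positions $p'\le p-1$ with a forward move is $A_{\cdot,\cdot}[p-1]+1$, and the no-move case contributes $X(a_{i-1},a_p,b_j)$, etc. Taking the min over the four $(i',j')\in\{(i-1,j),(i,j-1),(i-1,j-1),(i,j)\}$ cases yields $X(a_i,a_p,b_j)$ in $O(1)$ per cell, so the total is $O(m\cdot n\cdot m)=O(m^2n)$ time (and space). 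Initialization sets $X(a_1,a_p,b_1)=1$ for each valid starting configuration, and the answer is $\min_p X(a_m,a_p,b_n)$.

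I expect the main obstacle to be the correctness bookkeeping in the reverse direction — specifically, verifying that collapsing runs of dog-only edges does not change feasibility and that the three leash constraints along a $G$-path really do certify both $\dfd(A,A')\le\delta_1$ and $\dfd(A',B)\le\delta_3$ simultaneously, i.e., that the two couplings extracted from a single monotone configuration sequence are each valid discrete \frechet\ couplings (monotone and surjective onto the index sets). This is the same subtlety that underlies Algorithm~\ref{alg:cps3f}, so it can be handled by the same interleaving argument; the $O(1)$-per-cell recurrence and its running-time analysis are then routine.
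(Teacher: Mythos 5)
Your proposal is correct and follows essentially the same route as the paper: the same configuration graph of triples $(a_i,a_p,b_j)$ with the same edge/weight structure (Algorithm~\ref{alg:cps3f1s}), followed by the same dynamic-programming reorganization using the prefix-minimum tables $A_{i,j}[p]$ and the observation that a predecessor configuration has the man at $a_{i-1}$ or $a_i$ and the woman at $b_{j-1}$ or $b_j$, giving $O(1)$ work per cell and $O(m^2n)$ total time. The initialization and the final $\min_p X(a_m,a_p,b_n)$ step also match the paper's algorithm.
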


We now study the natural problem that is obtained from the 1-sided chain pair simplification problem by omitting the requirement that $\dfd(A,A') \le \delta_1$.

\begin{problem}[Relaxed 1-Sided Chain Pair Simplification]\hfill \\
    \textbf{Instance:} Given a pair of polygonal chains $A$ and $B$ of lengths $m$ and $n$,
    respectively, an integer $k$, and a real number $\delta>0$.\\
    \textbf{Problem:} Does there exist a chain $A'$ of at most $k$ vertices,
    such that the vertices of $A'$ are from $A$ and $\dfd(A',B) \le \delta$?
\end{problem}

This problem induces two optimization problems (as in \cite{BeregJWYZ08}), depending on whether we wish to optimize the length of $A'$ or the
distance between $A'$ and $B$. Below we solve both of them, beginning with the former problem.

%\paragraph*{Min-\# Fitting with a Given Error Bound.}
\subsection{Minimizing \texorpdfstring{$k$}{k} given \texorpdfstring{$\delta$}{delta}}

In this problem, we wish to minimize the length of $A'$ without exceeding the allowed error bound.
\begin{problem}\label{prb:min-num-fitting}
Given two chains $A=(a_{1},\ldots,a_{m})$ and $B=(b_{1},\ldots,b_{n})$
and an error bound $\delta>0$, find a simplification $A'$ of $A$ of minimum length,
such that the vertices of $A'$ are from $A$ and $\dfd(A',B)\le\delta$.
\end{problem}

%In other words, we want to simplify $A$ into a chain $A'$ of minimum size such that the distance between the simplification to some other curve $B$ %will not exceed some threshold, ensuring that the distance between $A'$ and $B$  stays small.

For $B=A$, Bereg et al.~\cite{BeregJWYZ08} presented an $O(n^2)$-time dynamic programming algorithm.
(For the case where the vertices of $A'$ are not necessarily from $A$, they presented an $O(n\log n)$-time greedy algorithm.)

\begin{theorem}
\label{thm:min-num-fitting}Problem~\ref{prb:min-num-fitting} can be solved in $O(mn)$ time and space.
\end{theorem}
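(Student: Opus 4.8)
The plan is to set up a dynamic program over the vertices of $B$ together with a "window" variable recording how far along $A$ we have committed. Since the vertices of $A'$ must come from $A$, and $\dfd(A',B)\le\delta$ must hold, a Fr\'echet traversal pairs each $b_j$ with some vertex of $A'$, and the vertices of $A'$ appear in $A$-order; so the traversal is monotone in both the index into $A$ (restricted to the chosen subsequence) and the index into $B$. The key observation is that we may as well think of $A'$ as being read off greedily: define $f(i,j)$ to be the minimum possible length of a prefix simplification $A''\subseteq(a_1,\dots,a_i)$ ending exactly at $a_i$ such that there is a valid partial Fr\'echet matching of $A''$ against $(b_1,\dots,b_j)$ in which $b_j$ is matched to $a_i$ (i.e.\ $d(a_i,b_j)\le\delta$, and $a_i$ is the last vertex of $A''$ so far). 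First I would write the recurrence for $f(i,j)$.

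The recurrence should have the form
\[
f(i,j)=\min\Bigl\{\,\min_{i'<i,\ d(a_{i'},b_j)\le\delta} f(i',j),\ \ \min_{i'\le i}\ \bigl(f(i',j-1)+[\,i'<i\,]\bigr)\Bigr\},
\]
with the proviso that $f(i,j)$ is defined (finite) only when $d(a_i,b_j)\le\delta$, and with appropriate handling when $a_i=a_{i'}$ stays fixed (the dog, i.e.\ $b$, advances while $A'$ does not, contributing $0$ to the length; when $A'$ advances to a genuinely new vertex we pay $1$). One subtlety: consecutive vertices of $B$ may be matched to the same vertex of $A'$, and a single new vertex of $A'$ may absorb several steps of $B$, so in the ``$j-1\to j$'' transition we must allow $i'=i$ (no new $A$-vertex) as well as $i'<i$ (one new $A$-vertex, namely $a_i$, which then also covers the $b_j$ step). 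To make this $O(mn)$ rather than $O(m^2n)$ I would maintain, for each $j$, running prefix-minima over $i$ of the quantities $f(i,j)$ (restricted to $i$ with $d(a_i,b_j)\le\delta$) and of $f(i,j-1)$, exactly in the spirit of the $A_{i,j}[p]$ tables of Algorithm~4 and the $C,R,T$ tables of Section~\ref{sec:alg2}; then each entry $f(i,j)$ is computed from a constant number of previously computed values. The answer is $\min_i f(i,n)$ if that is finite, and ``no valid simplification'' otherwise; also we must force $a_1$ and $a_m$ into $A'$ (or, per Remark~\ref{rem:init-final-conf}, treat all of row $j=1$ and column $i=m$ appropriately), which only changes the base cases.

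The steps in order: (1) define $f(i,j)$ precisely, including the convention on which vertex of $A'$ plays ``current'' and the requirement $d(a_i,b_j)\le\delta$; (2) prove the recurrence correct by the standard exchange argument on an optimal Fr\'echet coupling of $A'$ with $B$ --- look at the last move of the coupling, which is either a $B$-only step (man stays, dog hops: gives the $f(i',j)$, $i'<i$ term, but actually this is a step where $A'$ did \emph{not} advance, handled by letting $j$ stay and $i$ vary... I need to be careful here), or an $A'$-only step, or a simultaneous step, and match each case to a term; (3) verify that restricting to traversals where $A'$'s ``frontier'' vertex always equals the $A$-vertex currently matched loses nothing, since any optimal $A'$ can be relabeled this way; (4) implement with prefix-minimum arrays to get $O(mn)$ time and $O(mn)$ space (or $O(n)$ working space if one only wants the count, since each row of $f$ depends only on the previous row plus prefix minima of the current row). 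The main obstacle I anticipate is getting the bookkeeping of the recurrence exactly right --- in particular correctly charging the ``$+1$'' for a new $A'$-vertex exactly once even when that new vertex is simultaneously used to cover a new $B$-vertex, and ensuring monotonicity so that the prefix-minimum trick is valid; this is the same kind of delicate case analysis as in Algorithm~\ref{alg:cps3fdp}, just one dimension smaller, so I expect it to go through with care but it is where an off-by-one error would hide.
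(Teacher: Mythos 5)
Your overall plan is the mirror image of the paper's proof: the paper runs the same dynamic program over \emph{suffixes}, with $O[i,j]$ = minimum length of a simplification of $A[i\ldots m]$ that begins at $a_i$ and satisfies $\dfd(\cdot,B[j\ldots n])\le\delta$, and a running-minimum table $X[i,j]=\min_{i'\ge i}O[i',j]$ playing exactly the role of your prefix-minimum arrays; this is what gives $O(mn)$ time and space. So the decomposition, the state (position in $A$ matched to the current $b_j$, plus a monotone-minimum helper table), and the complexity argument are essentially identical to the paper's.

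However, the recurrence you actually wrote down is wrong, and in exactly the place you flagged. Your first term $\min_{i'<i,\ d(a_{i'},b_j)\le\delta} f(i',j)$ models an $A'$-only advance (the simplification moves from $a_{i'}$ to $a_i$ while $B$ stays at $b_j$), but it charges nothing for the new vertex $a_i$; it must be $f(i',j)+1$. As written, the DP can insert new $A$-vertices for free along a row, and this changes answers, not just bookkeeping: take $A=(a_1,a_2)$, $B=(b_1,b_2,b_3)$ with $d(a_1,b_1),d(a_1,b_2),d(a_2,b_2),d(a_2,b_3)\le\delta$ and $d(a_2,b_1),d(a_1,b_3)>\delta$. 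Every feasible $A'$ must contain $a_1$ (the only vertex near $b_1$) and $a_2$ (the only vertex near $b_3$), so the optimum is $2$; but your recurrence gives $f(1,1)=1$, $f(1,2)=1$, then $f(2,2)=f(1,2)=1$ via the uncharged first term, and $f(2,3)=1$, so the algorithm would report $1$. The fix is easy and preserves $O(mn)$: either add the $+1$, or prove (by the exchange argument you sketch in step (3)) that $A'$-only steps can be eliminated from an optimal coupling and drop the term altogether --- the paper does the latter in one line, observing that the ``$A$-advances, $B$ stays'' option is dominated because $X[i+1,j+1]\le X[i+1,j]$. With that correction, your argument goes through and coincides with the paper's.
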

\begin{proof}

We present an $O(mn)$-time dynamic programming algorithm. The algorithm finds the length of an optimal simplification;
the actual simplification is constructed by backtracking the algorithm's actions.

Define two $m \times n$ tables, $O$ and $X$.
The cell $O[i,j]$ will store the length of a minimum-length simplification $A^{i}$
of $A[i \ldots m]$ that begins at $a_{i}$ and such that $\dfd(A^{i},B[j \ldots n])\le\delta$.
The algorithm will return the value $\min_{\substack{{1\le i\le m}}} O[i,1]$.

We use the table $X$ to assist us in the computation of $O$. More precisely, we define:
$$
X[i,j]=\min_{\substack{{i'\ge i}}} O[i',j]\, .
$$
Notice that $X[i,j]$ is simply the minimum of $X[i+1,j]$ and $O[i,j]$.

We compute $O[-,-]$ and $X[-,-]$ simultaneously, where the outer for-loop is governed by (decreasing) $i$ and
the inner for-loop by (decreasing) $j$.
First, notice that if $d(a_{i},b_{j})>\delta$, then there is no simplification fulfilling the required conditions, so we set $O[i,j]=\infty$.
Second, the entries (in both tables) where $i=m$ or $j=n$ can be handled easily.
In general, if $d(a_{i},b_{j})\le\delta$, we set
$$
O[i,j]=\min\{O[i,j+1],\, X[i+1,j+1]+1\}\, .
$$

We now justify this setting.
Let $A^{i}$ be a minimum-length
simplification of $A[i \ldots n]$ that begins at $a_{i}$ and such that
$\dfd(A^{i},B[j \ldots n])\le\delta$. The initial configuration of the joint walk along $A^{i}$ and $B[j \ldots n]$ is $(a_i,b_j)$.
The next configuration is either $(a_i,b_{j+1})$, $(a_{i'},b_{j})$ for some $i'\ge i+1$, or $(a_{i'},b_{j+1})$ for some $i'\ge i+1$.
However, clearly $X[i+1,j+1] \le X[i+1,j]$, so we may disregard the middle option.

\old{Let $W^{i}$ be a paired
walk of $A^{i}$ and $B[j...n]$ with cost $\delta^{i}$. Notice that
for every pair $(A_{k}^{i},B_{k})\in W^{i}$ it holds that $|A_{k}^{i}|=1$,
otherwise we could remove points from $A^{i}$ and get a smaller simplification.
Thus we get that $A_{1}^{i}=\{a_{i}\}$ and $b_{j}\in B_{1}$. Clearly,
if $b_{j+1}\in B_{1}$ then $\delta^{i}=O[i,j+1]$. Else, let $i_{2}$
be the index such that $A_{2}^{i}=\{a_{i_{2}}\}$ let $i'$ be the
index such that $X[i,j]=O[i',j+1]$, so $O[i',j+1]\le O[i_{2},j+1]$,
and because $A^{i}$ is of minimum length we have $O[i',j+1]+1=O[i_{2},j+1]+1=\delta^{i}$.
}

\end{proof}

\subsection{Minimizing \texorpdfstring{$\delta$}{delta} given \texorpdfstring{$k$}{k}}

In this problem, we wish to minimize the discrete \frechet\ distance between $A'$ and $B$, without exceeding the allowed length.
\begin{problem} \label{prb:min-delta-fitting}
 Given two chains $A=(a_{1},\ldots,a_{m})$ and $B=(b_{1},\ldots,b_{n})$
and a positive integer $k$, find a simplification $A'$ of $A$ of length at most $k$,
such that the vertices of $A'$ are from $A$ and $\df(A',B)$ is minimized.
\end{problem}

%Here, the goal is to keep the distance between $A'$ and $B$ as small as possible, despite the small number of vertices allowed.

For $B=A$, Bereg et al.~\cite{BeregJWYZ08} presented an $O(n^3)$-time dynamic programming algorithm.
(For the case where the vertices of $A'$ are not necessarily from $A$, they presented an $O(kn\log n\log(n/k))$-time greedy algorithm.)
We give an $O(mn\log{(mn)})$-time algorithm for our problem, which yields an $O(n^2 \log n)$-time algorithm for $B=A$, thus significantly improving the result of Bereg et al.

\begin{theorem}
Problem \ref{prb:min-delta-fitting} can be solved in $O(mn\log{(mn)})$ time and $O(mn)$ space.
\end{theorem}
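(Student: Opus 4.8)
The plan is to reduce Problem~\ref{prb:min-delta-fitting} to a small number of calls to the algorithm of Theorem~\ref{thm:min-num-fitting}. The first, and essentially only, thing to observe is that the optimum value
$\delta^{*}=\min\{\dfd(A',B)\mid A'\subseteq A\text{ a subsequence},\ |A'|\le k\}$
is attained at one of the $mn$ values in the set $D=\{d(a_i,b_j)\mid 1\le i\le m,\ 1\le j\le n\}$. Indeed, for any fixed subsequence $A'$ of $A$, the value $\dfd(A',B)$ is the minimum, over all valid couplings, of the maximum of finitely many distances $d(a_i,b_j)$ with $a_i\in A'\subseteq A$ and $b_j\in B$, hence it lies in $D$; since $\delta^{*}$ is in turn a minimum over a finite family of such values, $\delta^{*}\in D$ as well, and $|D|\le mn$. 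Now let $\ell(\delta)$ denote the minimum length of a simplification $A'$ of $A$ with vertices from $A$ and $\dfd(A',B)\le\delta$ --- this is exactly the quantity that the $O(mn)$-time, $O(mn)$-space dynamic program of Theorem~\ref{thm:min-num-fitting} computes. Since any simplification feasible for a distance bound $\delta_{1}$ is also feasible for any $\delta_{2}\ge\delta_{1}$, the function $\ell$ is non-increasing, so $\delta^{*}$ is precisely the smallest $\delta\in D$ with $\ell(\delta)\le k$ (and the instance is infeasible iff even $\ell(\max D)>k$).

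Given this, the algorithm I would run is: compute the multiset $D$ and sort it in $O(mn\log(mn))$ time, then binary search for the smallest $\delta\in D$ satisfying $\ell(\delta)\le k$, evaluating $\ell$ at $O(\log(mn))$ candidate values by invoking the dynamic program of Theorem~\ref{thm:min-num-fitting}. Each such invocation costs $O(mn)$ time and $O(mn)$ space, and the tables can be discarded between invocations, so the search costs $O(mn\log(mn))$ time and $O(mn)$ space overall. Finally, I would re-run the dynamic program once with $\delta=\delta^{*}$ and recover the actual simplification $A'$ by backtracking through its tables, exactly as in the proof of Theorem~\ref{thm:min-num-fitting}. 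Summing up, the total running time is $O(mn\log(mn))$ and the space is $O(mn)$; specializing to $B=A$ yields an $O(n^{2}\log n)$-time algorithm, improving the $O(n^{3})$ bound of Bereg et al.~\cite{BeregJWYZ08}.

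The main (and really the only non-routine) point is the structural observation that $\delta^{*}\in D$, which is what makes a plain binary search over $O(mn)$ explicit values correct and avoids any need for Megiddo-style parametric search; it is a standard consequence of the ``bottleneck'' nature of the discrete \frechet\ distance, but it must be stated carefully because the feasibility region of $A'$ is discrete and one must check that minimizing over subsequences does not push the optimum outside $D$. Everything else --- monotonicity of $\ell$, correctness of the binary search, and the complexity bookkeeping --- is immediate. (If one wished to avoid the explicit sort of $D$, a linear-time selection procedure on $D$ combined with the same binary search would still achieve the stated bounds, but since sorting already fits within $O(mn\log(mn))$ this refinement is unnecessary.)
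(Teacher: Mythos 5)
Your proposal is correct and follows essentially the same route as the paper: observe that the optimum lies in the set $D$ of pairwise distances, then binary search over $D$ using the $O(mn)$-time algorithm of Theorem~\ref{thm:min-num-fitting} as the feasibility oracle, giving $O(mn\log(mn))$ time and $O(mn)$ space. Your added justifications (why $\delta^{*}\in D$, monotonicity of $\ell$, and the final backtracking run) merely spell out details the paper leaves implicit.
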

\begin{proof}
Set $D=\{d(a,b) | a \in A, b \in B\}$. Then, clearly, $\df(A',B) \in D$, for any simplification $A'$ of $A$.
Thus, we can perform a binary search over $D$ for an optimal simplification of length at most $k$.
Given $\delta \in D$, we apply the algorithm for Problem~\ref{prb:min-num-fitting} to find (in $O(mn)$ time) a simplification $A'$ of $A$
of minimum length such that $\df(A',B) \le \delta$.
Now, if $|A'|>k$, then we proceed to try a larger bound, and if $|A'| \le k$,
then we proceed to try a smaller bound. After $O(\log{(mn)})$ iterations we reach the optimal bound.
\end{proof}

\section{Some Empirical Results} \label{sec:empirical}

In this section, we show some empirical results obtained by running
a C++ implementation of Algorithm~\ref{alg:cps3fdp} on a standard desktop machine.
The best available algorithm prior to this work was Algorithm~FIND-CPS3F$^+$, i.e., the algorithm (mentioned in the introduction)
for the optimization version of CPS-3F$^+$, proposed by Wylie and Zhu~\cite{WylieZ13}.
This algorithm is a 2-approximation algorithm for the optimization version of CPS-3F~\cite{WylieZ13}, and,
obviously, it cannot outperform Algorithm~\ref{alg:cps3fdp} in the length of the simplification that it computes.
The goal of this experimental study is thus twofold: (i) to verify that Algorithm~\ref{alg:cps3fdp} can cope with real datasets taken from
the Protein Data Bank (PDB), and (ii) to examine the actual improvement obtained in the length of the simplification
w.r.t. Algorithm~FIND-CPS3F$^+$.

Our results (summarized below) show that indeed Algorithm~\ref{alg:cps3fdp} can handle real datasets.
(This is not true for the initial algorithm, i.e., Algorithm~\ref{alg:cps3f}, whose $O(m^3n^3)$ space requirement would lead
to memory overflow for most proteins. Recall that there may be as many as 500$\sim$600 $\alpha$-carbon atoms along a protein backbone.)
Moreover, our results show significant improvement in the length of the simplification, which is very important for the underlying structural biology
applications.

As in~\cite{WylieZ13}, we consider two cases: similar chain length and
varying chain length comparisons. We use the same data and parameters
as in~\cite{WylieZ13}.
%The algorithm was implemented in C++ and run on a standard desktop machine.
%common DELL OPTIPLEX 9010 (with a 16G memory).
%We compare our results with those reported in~\cite{WylieZ13} for Algorithm~FIND-CPS3F$^+$ (i.e., the algorithm for
%the optimization version of CPS-3F$^+$ mentioned in the introduction), .

\subsection{Similar chain length comparisons}

We use the same seven pairs of protein backbones from the Protein Data Bank
which were used in~\cite{WylieZ13}.
To be consistent, we use the same sets of $\delta_1,\delta_2,\delta_3$ (in {\aa}ngstr\"{o}ms --- note that
the distance between two consecutive nodes, or $\alpha$-carbon atoms, on a protein backbone,
is typically between 3.7 to 3.8 {\aa}ngstr\"{o}ms).
%$\delta_3$ is set as $\lceil d \rceil$, where $d$ is the discrete Frechet distance between $A$ and $B$.
The results are summarized in Tables~\ref{Tab:compare1}-\ref{Tab:compare3}.

\begin{table}[hbp]
\centering
\begin{tabular}{|p{1.5cm}|p{1cm}|p{1cm}|p{1cm}|p{1cm}|p{2.5cm}|p{2.5cm}|}
 \hline
Protein Chain(B)     & $|B|$   & $\delta_1$ & $\delta_2$ & $\delta_3$  & $\max\{|A''|,|B''|\}$ by CPS-3F$^+$ \cite{WylieZ13}  & $\max\{|A'|,|B'|\}$ by CPS-3F \\

\hline
1hfj.c  & 325  & 4 & 4 & 1 & 109 & 83 \\
\hline
1qd1.b   & 325  & 4  & 4  & 21 & 126 & 82\\
\hline
1toh  & 325  & 4  & 4 &21   & 149 & 84 \\
\hline
4eca.c   &325 & 4 & 4  &6   & 111 & 83  \\
\hline
1d9q.d & 297  & 4  & 4  & 20  & 130  & 82\\
\hline
4cea.b & 325  & 4  & 4  & 5  & 111  & 82\\
\hline
4cea.d & 325  & 4  & 4  & 5   & 113  & 84\\
\hline
\end{tabular}
\caption{Comparison of Algorithm FIND-CPS-3F$^{+}$~\cite{WylieZ13} and Algorithm~\ref{alg:cps3fdp} in this paper with 107j.a (Chain A) of Length 325.
Here $A''$ and $B''$ are the chains simplified from $A$ and $B$, respectively, using the former (approximation) algorithm FIND-CPS-3F$^{+}$, and
$A'$ and $B'$ are the chains simplified from $A$ and $B$, respectively, using Algorithm~\ref{alg:cps3fdp}.}
\label{Tab:compare1}
\end{table}

In Table~\ref{Tab:compare1},
$\delta_3$ is set to $\lceil\dfd(A,B)\rceil$.
From this table one can see that
with Algorithm~\ref{alg:cps3fdp}, we get $\max\{A,B\}/\max\{|A'|,|B'|\}\approx 4$, while with Algorithm FIND-CPS-3F$^{+}$,
we get $\max\{A,B\}/\max\{|A''|,|B''|\}\approx 3$, using the same data and parameters.
Hence, $\max\{|A'|,|B'|\}/\max\{|A''|,|B''|\}\approx 3/4$.

\begin{table}[!htp]
\centering
\begin{tabular}{|p{1.5cm}|p{1cm}|p{1cm}|p{1cm}|p{1cm}|p{2.5cm}|p{2.5cm}|}
 \hline
Protein Chain(B)     & $|B|$   & $\delta_1$ & $\delta_2$ & $\delta_3$  & $\max\{|A''|,|B''|\}$ by CPS-3F$^+$ \cite{WylieZ13} & $\max\{|A'|,|B'|\}$ by CPS-3F \\

\hline
1hfj.c  & 325  & 12 & 12 & 1 & 26 & 15 \\
\hline
1qd1.b   & 325  & 15  & 15  & 12 & 21 & 11\\
\hline
1toh  & 325  & 16  & 16 & 13   & 22 & 11 \\
\hline
4eca.c   &325 & 12 & 12  & 3   & 27 & 16  \\
\hline
1d9q.d & 297  & 15  & 15  & 13  & 24  & 12\\
\hline
4cea.b & 325  & 12  & 12  & 3  & 26  & 15\\
\hline
4cea.d & 325  & 12  & 12  & 3   & 32  & 16\\
\hline
\end{tabular}
 \caption{Comparison of Algorithm FIND-CPS-3F$^{+}$ \cite{WylieZ13} and Algorithm~\ref{alg:cps3fdp} in this paper with 107j.a (Chain A) of Length 325.}
\label{Tab:compare2}
\end{table}

In Table~\ref{Tab:compare2}, the parameters $\delta_1=\delta_2$ are
set to much larger values than in Table~\ref{Tab:compare1} (allowing us to set $\delta_3$ to smaller values).
The exact solutions by Algorithm~\ref{alg:cps3fdp} are even better now (w.r.t. Algorithm~FIND-CPS-3F$^{+}$).
From Table~\ref{Tab:compare2}, one can see that $\max\{|A'|,|B'|\}/\max\{|A''|,|B''|\}\approx 1/2$.

\subsection{Varying chain length comparisons}

In Table~\ref{Tab:compare3}, we simplify $A$ with several $B$ chains of
varying lengths. The parameter $\delta_1$ is not set to be equal
to $\delta_2$ anymore. From the table, it can be seen that
$\max\{|A'|,|B'|\}/\max\{|A''|,|B''|\}$ are mostly bounded by $2/3$ to $1/2$.

\begin{table}[!hbp]
\centering
\begin{tabular}{|p{1.5cm}|p{1cm}|p{1cm}|p{1cm}|p{1cm}|p{2.5cm}|p{2.5cm}|}
 \hline
Protein Chain(B)     & $|B|$   & $\delta_1$ & $\delta_2$ & $\delta_3$  & $\max\{|A''|,|B''|\}$ by CPS-3F$^+$ \cite{WylieZ13} & $\max\{|A'|,|B'|\}$ by CPS-3F \\

\hline
3ntx.a  & 322  & 10 & 10 & 5 & 39 & 25 \\
\hline
1wls.a   & 316  & 15  & 13  & 6 & 22 & 14\\
\hline
2eq5.a  & 215  & 8  & 6 & 19   & 58 & 32 \\
\hline
2zsk.a   &219 & 12 & 8  & 17   & 38 &  19 \\
\hline
1zq1.a & 418  & 10  & 12  & 19  & 45  & 23\\
\hline
3jq0.a & 457  & 12  & 12  & 26  & 70  & 36\\
\hline
2fep.a & 273  & 12  & 12  & 10   & 11  & 6\\
\hline
\end{tabular}
 \caption{Comparison of Algorithm FIND-CPS-3F$^{+}$~\cite{WylieZ13} and Algorithm~\ref{alg:cps3fdp} in this paper with 107j.a (Chain A) of Length 325. Here chain $B$ is of varying lengths.}
\label{Tab:compare3}
\end{table}

\old{
We comment that the space complexity of the initial algorithm for the minimization version of CPS-3F (Algorithm~\ref{alg:cps3f}), i.e.,
$O(m^3n^3)$, is not efficient enough for most proteins and
would lead to memory overflow even when $m,n$ are about 100 --- the upper bound
of protein backbone sizes are roughly 600. The improved version, i.e.,
CPS-3F using dynamic programming (Algorithm~2), handles all these scenarios.
}

\begin{remark}
Computing $A',B'$ for a pair of protein backbones $A,B$ might take several hours.
For example, for the largest pair (i.e., $A=$107j.a and $B=$3jq0.a in
Table~\ref{Tab:compare3}) it takes about 20 hours, so finding heuristics for
expediting the computation would be desirable.
One such heuristic, is to run Algorithm~FIND-CPS-3F$^{+}$~\cite{WylieZ13} to obtain a smaller
upper bound on $r$, i.e., $\max\{|A''|,|B''|\}$ instead of $m$, before running
Algorithm~\ref{alg:cps3fdp}.
\end{remark}

\old{
\section{Conclusion} \label{sec:conclusion}
In this paper we introduced the weighted chain pair simplification problem (WCPS-3F) and proved that
it is weakly \npc. We then resolved the complexity of CPS-3F, where all weights
are equal to one, which has been an open problem since 2008.
Specifically, we presented a polynomial-time solution for CPS-3F, and
showed how it can be extended to a pseudo-polynomial time
solution for weighted CPS-3F.

However, the running time of our algorithm is extremely high. Since the CPS-3F
problem has several applications that requires an efficient running time, an open question
is whether it is possible to reduce the running time for the CPS-3F problem. Notice that
in our solution we don't rely on the fact that the vertices of the simplification are
from the chain itself, a detail that may help reduce the running time.

We know that MIN-CPS-3F has at least two 2-approximations \cite{Wylie13,WylieZ13}, but
it remains open whether a constant factor approximation exists for MIN-WCPS-3F, and what the lower
bound on the approximation factor might be.

A lot of research for the \frechet\ distance is focused on planar applications, so an important
open question is whether the running time for CPS-3F can be reduced, especially if the two curves are planar (in 2D).

We presented a special case of the CPS problem which has a quadratic solution. The one-sided and two-sided shortcuts with the discrete \frechet\ distance \cite{AvrahamFKKS14} are special cases of CPS-3F (in the plane) that are subquadratic. What other special cases of CPS-3F, or WCPS-3F, are within reach with restrictions based on dimensionality, types of curves, or the amount of simplification allowed?

In our problem we require that the vertices of the simplification are from simplified chain, since arbitrary vertices are not meaningful in the context of comparing backbones of proteins, which was the main motivation for the CPS problem. But a version with arbitrary vertices should be examined too.
}

\bibliographystyle{alpha}
\bibliography{refs}

\end{document}